\documentclass{article}

\usepackage{arxiv}
\usepackage[utf8]{inputenc}
\usepackage[T1]{fontenc}
\usepackage{hyperref}
\usepackage{url}
\usepackage{booktabs}
\usepackage{amsmath,amssymb,amsfonts}
\usepackage{amsthm}
\usepackage{mathrsfs}
\usepackage{nicefrac}
\usepackage{microtype}
\usepackage{graphicx}
\usepackage{multirow}
\usepackage{textcomp}
\usepackage{xcolor}
\usepackage{algorithm}
\usepackage{algorithmicx}
\usepackage{algpseudocode}
\usepackage{rotating}
\usepackage{pifont}
\usepackage{changepage}
\usepackage{listings}
\usepackage{tikz}
\usepackage{cite}
\usetikzlibrary{positioning,arrows.meta}

\graphicspath{{./images/}}

\newcommand{\cmark}{\ding{51}}
\newcommand{\xmark}{\ding{55}}
\newcommand{\sample}{\mathrel{\xleftarrow{\$}}}
\definecolor{headerblue}{RGB}{13,71,161}
\definecolor{rowgray}{RGB}{245,245,245}

\newtheorem{Theorem}{Theorem}
\newtheorem{Proposition}{Proposition}
\theoremstyle{definition}
\newtheorem{Definition}{Definition}
\theoremstyle{remark}
\newtheorem{Remark}{Remark}

\title{A Threshold Homomorphic Blockchain Architecture for Secure and Scalable IoT Sensor Data Aggregation}

\author{
Narendra Kumar Dewangan\\
LTCI, Department of Computer Science and Networks (INFRES),\\
Telecom Paris, Institut Polytechnique de Paris, France\\
\And
Mounira Msahli\\
LTCI, Department of Computer Science and Networks (INFRES),\\
Telecom Paris, Institut Polytechnique de Paris, France\\
\texttt{mounira.msahli@telecom-paris.fr}
}

\begin{document}
\maketitle

\begin{abstract}
Homomorphic-encryption blockchain frameworks for IoT sensor data aggregation typically rely on classical cryptographic hardness assumptions to provide on-chain data confidentiality, while rarely considering network topology as a factor affecting liveness and performance. This work introduces $\Phi$-PHE-BC, a topology-aware homomorphic blockchain architecture for secure, privacy-preserving IoT sensor data aggregation, employing threshold Paillier decryption and graph-parameterized performance analysis. Using a network graph $G$, device capability class $D$, aggregation function $\Phi$, cryptographic parameter set $\Lambda$, and consensus parameters $\Pi$, the framework defines a protocol instantiation whose security and performance properties are explicitly connected to the validator graph. We demonstrate that on-chain Paillier ciphertexts enable homomorphic aggregation of sensor readings while achieving standard classical security: IND-CPA security under the Decisional Composite Residuosity (DCR) assumption for aggregation confidentiality, and EUF-CMA security for transaction integrity via authentication signatures. Threshold partial-decryption shares are further protected by a noise-flooding wrapper whose privacy guarantee is information-theoretic under the configured statistical-hiding condition. Under partial synchrony and Byzantine fault-tolerance assumptions, protocol liveness is determined by the validator subgraph satisfying $\kappa(G_v) \geq f+1$. We further derive topology-parameterized bounds on throughput across tree, star, mesh, and scale-free deployment models, along with an explicit per-block communication-cost model. Our game-theoretic analysis shows that honest participation is a dominant strategy for each validator, yielding an all-honest Nash equilibrium under the stated utility model. The proposed architecture is practical, achieving consistently lower end-to-end latency than the selected traditional PHE-blockchain baseline while maintaining controllable threshold-decryption overhead in Hyperledger Fabric~2.5. The experimental configurations are consolidated in a benchmark table that distinguishes topology scaling, validator sensitivity, threshold decryption, and Byzantine-load experiments. The results indicate that $\Phi$-PHE-BC provides a practical solution for secure, privacy-preserving, and topology-aware IoT sensor data aggregation.
\end{abstract}

\noindent\textbf{Keywords:} Blockchain; Internet of Things (IoT); Homomorphic Encryption; Threshold Decryption; Network Topology; Byzantine Fault Tolerance; Privacy-Preserving Sensor Data Aggregation.

\section{Introduction}

Blockchain technology can provide transparency and privacy simultaneously;
however, data generated by Internet of Things (IoT) devices may contain highly
sensitive information about their owners and may be transmitted over open
communication channels. Modern distributed ledger platforms have introduced
lighter-weight consensus algorithms and more efficient implementations, yet
blockchain security remains challenged by the increasing computational
capabilities of adversaries and the continuous advancement of specialized
hardware for cryptographic attacks~\cite{10737075}. The Bitcoin
protocol~\cite{nakamoto2009bitcoin}, for example, employs the Proof-of-Work
(PoW) consensus mechanism, which requires substantial computational resources
and time for transaction confirmation and block generation.

The rapid proliferation of IoT deployments across critical infrastructure
domains, including industrial control systems, healthcare monitoring, smart
energy grids, and intelligent transportation networks, has generated
unprecedented volumes of sensitive operational data that must be processed
across distributed network environments~\cite{10737075,10755989,WANG2026107922}.
A fundamental tension characterizes this landscape: although the utility of IoT
data is maximized through aggregation and cross-organizational analysis, the
sensitivity of individual device readings requires raw data to remain
confidential, potentially even from infrastructure operators. Blockchain
technology has consequently emerged as a promising substrate for IoT data
management because of its tamper-resistant ledger, decentralized governance,
and programmable smart-contract capabilities~\cite{nakamoto2009bitcoin,XU2024124151}.
However, the transparency inherent in blockchain architectures---the very
property that enables auditability---directly conflicts with data confidentiality
requirements, thereby creating a privacy paradox that conventional blockchain
deployments cannot adequately resolve.

Partially Homomorphic Encryption (PHE), particularly the Paillier cryptosystem,
provides a principled approach to addressing this challenge by enabling
arithmetic computations directly over ciphertexts without requiring
decryption~\cite{SI202468,Paillier1999}. Under Paillier encryption, a blockchain
network can aggregate IoT sensor readings, compute weighted statistics, and
verify aggregated results while individual sensor values remain encrypted on
the distributed ledger. This approach has been explored in federated
learning~\cite{XIONG202495,10475694,10819476}, healthcare data
sharing~\cite{10737075,FIRDAUS2025101579}, carbon
accounting~\cite{HE2024110304}, cross-chain
interoperability~\cite{SI202468,Cao2024}, and transportation IoT
networks~\cite{10755989,10742617}. Fully Homomorphic Encryption (FHE)
schemes, including Brakerski/Fan-Vercauteren (BFV) and
Cheon--Kim--Kim--Song (CKKS), further extend the class of supported
computations to arbitrary polynomial operations and approximate floating-point
arithmetic~\cite{Paillier1999,WU2024467,LIU2025100689}, thereby enabling richer
analytics at the cost of substantially greater computational overhead.

Despite this body of work, two important limitations remain insufficiently
addressed in existing PHE-blockchain systems. First, long-lived IoT data
streams, including patient health records, energy-consumption patterns, and
industrial process parameters, may be stored and retained for extended periods.
This raises the broader question of how such deployments should evolve as
cryptographic standards advance. We revisit this issue in Section~\ref{subsec:limitations_future} as a
direction for future work rather than as a problem solved by the present
architecture. Second, and more importantly, the network topology of IoT
blockchain deployments, which fundamentally influences both security resilience
and system performance, is largely ignored in existing protocol designs and
analyses. Prior schemes commonly treat the validator and device network as an
abstract or ideal communication channel and consequently derive no guarantees
concerning liveness, throughput, or Byzantine fault tolerance as explicit
functions of the underlying graph structure.

In practical IoT deployments, network structures differ substantially across
application domains. Hierarchical tree topologies may arise in smart-grid
metering infrastructures, star configurations in hospital sensor networks, mesh
interconnections on industrial floors, and scale-free structures in smart-city
deployments. These topological differences directly affect communication paths,
connectivity, fault propagation, consensus delays, and resilience against
validator failures or adversarial disruption. Nevertheless, their security and
performance implications have not been systematically incorporated into existing
homomorphic blockchain architectures.

To address this gap, this paper presents $\Phi$-PHE-BC, a domain-agnostic and
topology-aware homomorphic aggregation framework for IoT blockchain networks.
The framework provides classical confidentiality for on-chain Paillier
aggregation, information-theoretic protection for threshold-decryption shares,
and standard signature-based transaction integrity, while deriving liveness,
throughput guarantees and communication-cost bounds as explicit functions of the deployment
topology. Given a network graph $G$, device capability class $D$, aggregation
function $\Phi$, cryptographic parameter set $\Lambda$, and consensus parameters
$\Pi$, the framework produces a protocol instantiation whose security guarantees
and performance bounds are parameterized by graph-theoretic properties of $G$.

The main contributions of this work are as follows:

\begin{itemize}
    \item We prove that protocol liveness under a Byzantine adversary
    $\mathcal{A}(\lambda,f,q_e,q_h)$ is characterized by the condition that
    the vertex connectivity of the validator subgraph satisfies
    $\kappa(G_v) \geq f+1$. We establish this result uniformly across the
    topology classes
    $\mathcal{T}=\{T_{\mathrm{tree}},T_{\mathrm{star}},T_{\mathrm{mesh}},
    T_{\mathrm{sf}}\}$. We further derive closed-form bounds for throughput
    $\mathrm{TPS}(G)$ and per-block communication cost parameterized by
    validator-network diameter, link latency, block size, and BFT voting cost
    (Theorem~10 and Proposition~11).

    \item We establish the classical and information-theoretic security
    foundations of the protocol stack. Specifically, we prove IND-CPA security
    of on-chain Paillier aggregation under the Decisional Composite Residuosity
    (DCR) assumption (Theorem~4), information-theoretic protection of threshold
    partial-decryption shares through noise flooding against computationally
    unbounded adversaries (Theorem~5), and EUF-CMA transaction integrity under
    standard digital-signature assumptions (Theorem~7).

    \item We prove that honest validator participation remains the dominant
    Nash-equilibrium strategy for validators and show that this conclusion is
    invariant to the magnitude of the per-transaction cryptographic cost
    multiplier. 

    \item We implement $\Phi$-PHE-BC on Hyperledger Fabric and evaluate
    throughput, end-to-end latency, and threshold-decryption overhead across
    four topology classes, benchmarking the proposed framework against the
    closest classical PHE-blockchain baselines.
\end{itemize}

\begin{table}[t]
\centering
\caption{Comparison with the closest prior schemes.}
\label{tab:comparison}
\renewcommand{\arraystretch}{1.15}
\begin{tabular}{lccc}
\hline
\textbf{Scheme} &
\textbf{Formal  Security Analysis} &
\textbf{Topology-Aware} &
\textbf{Threshold Decryption} \\
\hline
Lakhan et al.~\cite{10755989} & $\times$ & $\times$ & $\times$ \\
Jiang et al.~\cite{10742617} & $\times$ & $\times$ & $\times$ \\
Si et al.~\cite{SI202468} & $\times$ & $\times$ & $\checkmark$ \\
Wu et al.~\cite{WU2024467} & $\times$ & $\times$ & $\times$ \\
Latif et al.~\cite{Latif2026} & $\times$ & $\times$ & $\times$ \\
This work & $\checkmark$ & $\checkmark$ & $\checkmark$ \\
\hline
\end{tabular}

\vspace{0.5em}
\footnotesize
$\checkmark$ = property satisfied; $\times$ = property not satisfied.
\end{table}

Table~\ref{tab:comparison} positions $\Phi$-PHE-BC with respect to the closest
prior schemes. Lakhan et al.~\cite{10755989} integrate homomorphic encryption
with blockchain for transportation IoT applications but do not provide a
formal end-to-end security analysis or topology-aware analysis. Jiang
et al.~\cite{10742617} propose a DAG-blockchain PHE scheme for vehicular
networks without formally modeling the effects of network topology. Si
et al.~\cite{SI202468} employ threshold Paillier encryption for cross-chain
access control and represent the closest antecedent to our threshold-decryption
component; however, their framework assumes an ideal network and provides no
topology-parameterized guarantees. Wu et al.~\cite{WU2024467} integrate FHE into
an Ethereum-based environment but do not provide either a formal end-to-end security
analysis for the complete protocol stack or explicit network-topology analysis.
Latif et al.~\cite{Latif2026} present a blockchain-based authentication system
for e-health incorporating multi-factor authentication, hybrid RBAC--ABAC
access control, and post-quantum digital signatures; however, their work does
not address homomorphic aggregation or topology-aware guarantees for liveness,
throughput, and resilience.

Based on the literature reviewed in this study, we did not identify
prior work that jointly combines a formal end-to-end security treatment
for threshold homomorphic aggregation with topology-aware liveness and
throughput guarantees, together with connectivity-based fault-tolerance
and communication-cost analysis, within a single IoT blockchain
architecture. The remainder of this paper is organized as follows. Section~3 formalizes the network graph model,
device capability classes, adversary model, and security objectives.
Section~4 presents the $\Phi$-PHE-BC framework design, including the
cryptographic-layer selection algorithm and the Paillier--BFV bridge.
Section~5 establishes the end-to-end classical and information-theoretic security
guarantees of the protocol stack. Section~6 derives topology-parameterized
bounds for liveness, throughput, and communication cost. Section~7 analyzes
game-theoretic validator stability. Section~8 presents the implementation and
experimental evaluation. Finally, Section~9 concludes the paper.

\section{Literature Review}
\label{sec:literature}

A substantial body of work addresses the foundations, implementation, and
security of homomorphic encryption (HE). Paillier's additive cryptosystem
provides the classical PHE basis used in this paper~\cite{Paillier1999}, while
recent surveys and bibliometric studies map the evolution, capabilities, and
open challenges of modern HE systems~\cite{LIU2025100689,HARISS2026100815,SHARMA2026110969}.
Work on integrating FHE with blockchain and smart contracts highlights both the
expressiveness of schemes such as BFV/CKKS and their implementation
complexity~\cite{WU2024467,cryptoeprint:2025/527}. Public verification of FHE
computations, formal verification of FHE toolchains, homomorphic signatures,
and implementation-safety benchmarking further emphasize that practical HE
systems require more than encryption alone~\cite{cryptoeprint:2024/1764,Yang2025FormalVO,10854466,njungle2025safety}.
Selective and application-oriented HE designs continue to broaden the practical
space, including privacy-preserving face-recognition pipelines~\cite{11373303}.

A second major cluster combines HE with blockchain-assisted federated learning
and distributed analytics. CoPiFL studies collusion resistance and
privacy-preserving crowdsourcing~\cite{XIONG202495}; reputation and auditing
mechanisms are incorporated into blockchain-based FL in
\cite{10475694,11363238}; and PBFL employs homomorphic encryption with single
masking for privacy-preserving model aggregation~\cite{10819476}. Healthcare
and IoT deployments demonstrate the use of blockchain and HE in regulated,
sensitive-data settings~\cite{FIRDAUS2025101579,WANG2026107922,Jagdeesh2026},
while transportation-oriented FL and intrusion-detection designs show the same
pattern in highly distributed environments~\cite{11404228}. Collectively,
these works demonstrate the feasibility of blockchain--HE integration, but they
typically do not connect validator-network graph structure to formal liveness
and throughput guarantees.

HE--blockchain designs have also been investigated directly for IoT,
transportation, and cross-domain data sharing. Lakhan et al. propose a
homomorphic blockchain scheme for intelligent transport services in fog/cloud
and IoT networks~\cite{10755989}; Jiang et al. combine homomorphic encryption
with a DAG blockchain for vehicular networks~\cite{10742617}; and He et al.
apply blockchain-based protected accounting to carbon-emission data~\cite{HE2024110304}.
Cross-chain and searchable-encryption systems extend encrypted processing to
interoperable or attribute-controlled data-sharing settings~\cite{Cao2024,WANG2026112026}.
The blockchain-security survey of Xu et al. provides broader context for the
fusion of cryptographic protection with distributed ledgers~\cite{XU2024124151}.

Threshold decryption and privacy-preserving healthcare analysis are especially
relevant to the proposed architecture. Si et al. use threshold Paillier in a
cross-chain access-control mechanism~\cite{SI202468}; Guan et al. protect
healthcare-IoT data with blockchain-enhanced privacy mechanisms~\cite{10737075};
and PriCollabAnalysis combines blockchain, homomorphic encryption, secret
sharing, and secure multiparty computation for collaborative healthcare
analytics~\cite{Tawfik2025}. 

A smaller but important body of work considers HE beyond blockchain-centric IoT
aggregation. Ci et al. study privacy-preserving word-vector learning using PHE
\cite{CI2025103999}, while AlShaikh et al. combine partial homomorphic
processing with authenticated cloud-computing mechanisms~\cite{AlShaikh2025}.
These studies reinforce the broader utility of PHE for privacy-preserving
computation, but they do not address the topology-dependent BFT guarantees that
are central to the present work.

\subsection{Research Gaps and Motivation}
\label{sec:research_gaps}

The reviewed literature reveals four gaps that motivate $\Phi$-PHE-BC. First,
existing HE--blockchain systems generally select a single cryptographic
computation model for the full workflow. Additive PHE is attractive for frequent
linear aggregation because of its lower cost, whereas BFV/CKKS support richer
computations with greater overhead~\cite{LIU2025100689,WU2024467}. The proposed
architecture therefore uses Paillier as the primary on-chain aggregation path
and treats the TEE-assisted Paillier--BFV bridge as an optional extension for
non-linear functions rather than as a requirement for the evaluated baseline.

Second, prior IoT and vehicular blockchain schemes do not explicitly derive BFT
liveness and throughput from validator-graph properties. Transportation and
vehicular schemes such as~\cite{10755989,10742617,11404228} model distributed
operation but do not establish a necessary-and-sufficient validator-connectivity
condition of the form proved in Section~\ref{sec:network_resilience}. The present
work narrows the scope to this directly analyzed problem; dynamic membership and
mobility-aware rekeying remain outside the formal model.

Third, computation-validity mechanisms are often treated separately from the
complete aggregation workflow. Public-verification and SoK results identify
verification as an important open implementation dimension
\cite{cryptoeprint:2024/1764,cryptoeprint:2025/527}. Accordingly, the framework
includes an optional Fiat--Shamir ciphertext-validity proof for capable devices.
This mechanism proves the configured plaintext/randomness relation for a
submitted ciphertext; it is not claimed to provide a constant-time proof of
correctness for the complete aggregate.

Fourth, incentive alignment and network topology are rarely analyzed jointly in
HE-enabled blockchain systems. Blockchain-based FL schemes introduce reputation
or audit mechanisms~\cite{10475694,11363238}, but do not derive the validator
penalty condition studied in Section~\ref{sec:game}. Likewise, the reviewed
threshold-Paillier and IoT systems~\cite{SI202468,10737075,WANG2026107922} do not
parameterize liveness and throughput by validator vertex connectivity and
diameter. This omission is the central gap addressed in Section~\ref{sec:network_resilience}.

For future cryptographic migration, the standardized ML-KEM and ML-DSA
primitives provide relevant post-quantum key-establishment and signature
candidates~\cite{NIST-FIPS203,NIST-FIPS204}; SLH-DSA provides an additional
standardized hash-based signature option~\cite{nistFederalInformation}. These
standards are cited only as future-work candidates and are not implemented or
claimed as security properties of the present $\Phi$-PHE-BC system.

\section{System Model}
\label{sec:model}

The list of symbols used in this paper is given in
Table~\ref{tab:glossary} in the Appendix.

\subsection{Network Graph Model}
\label{subsec:graph}

We model the IoT blockchain network as a weighted undirected graph.

\begin{Definition}[IoT Blockchain Network Graph]
\label{def:network}
An IoT blockchain network is a tuple
\[
\mathcal{G}=(\mathcal{V},\mathcal{E},W),
\]
where:
\begin{itemize}
    \item $\mathcal{V}=\mathcal{V}_d\cup\mathcal{V}_v\cup\mathcal{V}_p$ is a
    finite node set partitioned into IoT device nodes $\mathcal{V}_d$,
    validator nodes $\mathcal{V}_v$, and blockchain peer nodes
    $\mathcal{V}_p$, with
    \[
    |\mathcal{V}_d|=n_d,\qquad
    |\mathcal{V}_v|=n_v,\qquad
    |\mathcal{V}_p|=n_p;
    \]

    \item $\mathcal{E}\subseteq\mathcal{V}\times\mathcal{V}$ is a set of
    authenticated point-to-point channels; and

    \item $W:\mathcal{E}\rightarrow\mathbb{R}_{>0}$ assigns a propagation
    latency $\tau_{uv}$ to each link $(u,v)\in\mathcal{E}$.
\end{itemize}
\end{Definition}

The validator subgraph
\[
\mathcal{G}_v=\mathcal{G}[\mathcal{V}_v]
\]
is the subgraph induced by the validator nodes. The maximum number of Byzantine
validators tolerated by the protocol is
\[
f=\left\lfloor\frac{n_v-1}{3}\right\rfloor,
\]
consistent with standard BFT assumptions~\cite{nakamoto2009bitcoin}.

Two graph-theoretic quantities govern the security and performance analysis
throughout this paper. The \emph{vertex connectivity}
$\kappa(\mathcal{G}_v)$ is the minimum number of validator nodes whose removal
disconnects $\mathcal{G}_v$; it characterizes fault tolerance. The
\emph{algebraic connectivity} (Fiedler value) $\lambda_2(\mathcal{G}_v)$ is the
second-smallest eigenvalue of the Laplacian of $\mathcal{G}_v$; it quantifies
how quickly information spreads across the validator subgraph.

We consider four topology classes that cover the principal IoT deployment
archetypes identified in the literature~\cite{10755989,10737075}:
\begin{itemize}
    \item $\mathcal{T}_{\mathrm{tree}}$: hierarchical tree, with
    \[
    \operatorname{diam}(\mathcal{G})=O(\log n_d),
    \qquad
    \kappa(\mathcal{G}_v)\geq 1,
    \]
    representing smart-grid metering hierarchies;

    \item $\mathcal{T}_{\mathrm{star}}$: hub-and-spoke topology, with
    \[
    \operatorname{diam}(\mathcal{G})=2,
    \qquad
    \kappa(\mathcal{G}_v)=1,
    \]
    representing hospital sensor networks;

    \item $\mathcal{T}_{\mathrm{mesh}}$: regular grid or torus, with
    \[
    \operatorname{diam}(\mathcal{G})=O(\sqrt{n_v}),
    \qquad
    \kappa(\mathcal{G}_v)\geq 4,
    \]
    representing industrial-floor sensor meshes; and

    \item $\mathcal{T}_{\mathrm{sf}}$: Barab\'asi--Albert scale-free graph,
    with
    \[
    \operatorname{diam}(\mathcal{G})=O(\log n_v),
    \]
    and variable $\kappa(\mathcal{G}_v)$, representing smart-city
    deployments.
\end{itemize}

\begin{Remark}
Star and tree topologies have $\kappa(\mathcal{G}_v)=1$, which violates the
liveness condition derived in Section~6. In such deployments, the validator
subgraph must be augmented independently of the device-layer topology so that
the required condition
\[
\kappa(\mathcal{G}_v)\geq f+1
\]
is satisfied. The device and validator sub-topologies are therefore treated
separately throughout the paper.

We assume authenticated-path routing among validators: a message between two
validators may be forwarded along a path in $\mathcal{G}_v$, and an honest
validator forwards correctly received protocol messages. Consequently, after
the removal of Byzantine validators, connectivity of the remaining honest
validator subgraph implies mutual communication among honest validators.
\end{Remark}

\subsection{Device Capability Model}
\label{subsec:device}

IoT devices span a wide range of computational resources. We classify them by
sustained cryptographic throughput, following classifications used in related
IoT security work~\cite{FIRDAUS2025101579,10737075}.

\begin{Definition}[Device Capability Class]
\label{def:device}
A device $d\in\mathcal{V}_d$ belongs to a class
\[
\mathcal{D}(d)\in
\left\{
\mathcal{D}_{\mathrm{I}},
\mathcal{D}_{\mathrm{II}},
\mathcal{D}_{\mathrm{III}}
\right\}
\]
according to:
\begin{itemize}
    \item \textbf{$\mathcal{D}_{\mathrm{I}}$ (constrained):}
    CPU $<100$~MHz and RAM $<256$~KB. Supports Paillier encryption only and
    cannot execute the full signature and zero-knowledge-proof stack on-device.

    \item \textbf{$\mathcal{D}_{\mathrm{II}}$ (moderate):}
    $100$~MHz $\leq$ CPU $<1$~GHz and RAM $\geq1$~MB. Supports Paillier
    encryption, session-key establishment, and standard digital signatures for
    transaction authentication.

    \item \textbf{$\mathcal{D}_{\mathrm{III}}$ (capable):}
    CPU $\geq1$~GHz and RAM $\geq64$~MB. Supports the full cryptographic layer
    stack, including standard digital-signature generation and zero-knowledge-proof generation.
\end{itemize}
\end{Definition}

The device class directly drives the selection of the cryptographic-layer
algorithm presented in Section~4.3. We note that the specific signature and
key-establishment primitives assigned to
$\mathcal{D}_{\mathrm{II}}$ and $\mathcal{D}_{\mathrm{III}}$ devices are
implementation choices rather than fixed requirements of the framework.

\subsection{Adversary Model}
\label{subsec:adversary}

We define a parameterized Byzantine adversary consistent with standard
partial-synchrony BFT analysis.

\begin{Definition}[Byzantine Adversary]
\label{def:adversary}
The adversary
\[
\mathcal{A}=\mathcal{A}(\lambda,f,q_e,q_h)
\]
is characterized by:
\begin{itemize}
    \item a security parameter $\lambda\in\mathbb{N}$;

    \item Byzantine control of at most $f$ validator nodes, which may collude
    and deviate arbitrarily from the protocol;

    \item at most $q_e$ encryption-oracle queries and $q_h$ random-oracle
    (hash) queries; and

    \item full Dolev--Yao control of all network channels:
    $\mathcal{A}$ may eavesdrop on, replay, delay, and inject messages on any
    link $(u,v)\in\mathcal{E}$.
\end{itemize}

The adversary $\mathcal{A}$ is computationally bounded in the standard sense
used for the classical hardness assumptions invoked in Section~5, namely the
Decisional Composite Residuosity (DCR) assumption for Paillier encryption and
the hardness assumption underlying the chosen digital signature scheme.
\end{Definition}

\begin{table*}[t]
\caption{Notation and symbols used in the system model.}
\label{tab:glossary}
\centering
\renewcommand{\arraystretch}{1.15}
\begin{tabular}{ll}
\hline
\textbf{Symbol} & \textbf{Description} \\
\hline
$\mathcal{G}=(\mathcal{V},\mathcal{E},W)$ &
Weighted IoT blockchain network graph \\
$\mathcal{V}$ & Set of all network nodes \\
$\mathcal{V}_d$ & Set of IoT device nodes \\
$\mathcal{V}_v$ & Set of validator nodes \\
$\mathcal{V}_p$ & Set of blockchain peer nodes \\
$n_d$ & Number of IoT device nodes, $|\mathcal{V}_d|$ \\
$n_v$ & Number of validator nodes, $|\mathcal{V}_v|$ \\
$n_p$ & Number of blockchain peer nodes, $|\mathcal{V}_p|$ \\
$\mathcal{E}$ & Set of authenticated communication links \\
$W$ & Edge-weight function representing propagation latency \\
$\tau_{uv}$ & Propagation latency of link $(u,v)$ \\
$\mathcal{G}_v$ & Validator-induced subgraph $\mathcal{G}[\mathcal{V}_v]$ \\
$f$ & Maximum number of tolerated Byzantine validators \\
$\kappa(\mathcal{G}_v)$ & Vertex connectivity of the validator subgraph \\
$\lambda_2(\mathcal{G}_v)$ & Algebraic connectivity (Fiedler value) of $\mathcal{G}_v$ \\
$\operatorname{diam}(\mathcal{G})$ & Diameter of graph $\mathcal{G}$ \\
$\mathcal{T}_{\mathrm{tree}}$ & Hierarchical tree topology class \\
$\mathcal{T}_{\mathrm{star}}$ & Hub-and-spoke topology class \\
$\mathcal{T}_{\mathrm{mesh}}$ & Mesh/grid topology class \\
$\mathcal{T}_{\mathrm{sf}}$ & Scale-free topology class \\
$\mathcal{D}(d)$ & Capability class of device $d$ \\
$\mathcal{D}_{\mathrm{I}}$ & Constrained device capability class \\
$\mathcal{D}_{\mathrm{II}}$ & Moderate device capability class \\
$\mathcal{D}_{\mathrm{III}}$ & Capable device capability class \\
$\mathcal{A}$ & Byzantine adversary \\
$\lambda$ & Cryptographic security parameter \\
$\varepsilon_{\mathrm{nf}}$ & Statistical-hiding target for the flooding layer \\
$q_e$ & Number of encryption-oracle queries \\
$q_h$ & Number of random-oracle/hash queries \\
\hline
\end{tabular}
\end{table*}
\section{Framework Design}
\label{sec:framework}

\subsection{Framework Parameterization}
\label{subsec:parameterization}

\begin{Definition}[$\Phi$-PHE-BC Framework Instance]
\label{def:framework}
A $\Phi$-PHE-BC instance is a tuple
\[
\mathcal{F}
=
(\mathcal{G},\mathcal{D},\Phi,\Lambda,\Pi),
\]
where:
\begin{itemize}
    \item $\mathcal{G}$ is a network graph as defined in
    Definition~\ref{def:network};

    \item $\mathcal{D}\in
    \{\mathcal{D}_{\mathrm{I}},
    \mathcal{D}_{\mathrm{II}},
    \mathcal{D}_{\mathrm{III}}\}$
    is the device capability class as defined in
    Definition~\ref{def:device};

    \item
    \[
    \Phi:\mathbb{Z}^{n_d}\rightarrow\mathbb{Z}
    \]
    is the target aggregation function, classified as linear,
    polynomial of degree $\ell$, or floating-point;

    \item
    \[
    \Lambda=
    (n,N_{\mathrm{BFV}},q,t,\sigma,\lambda,\varepsilon_{\mathrm{nf}})
    \]
    is the cryptographic parameter tuple, where $n$ is the Paillier
    modulus, $N_{\mathrm{BFV}}$ is the BFV ring dimension, $q$ is the
    ciphertext modulus, $t$ is the plaintext modulus, $\sigma$ is the
    Gaussian standard deviation, and $\lambda$ is the security parameter, while $\varepsilon_{\mathrm{nf}}$ is the configured statistical-hiding target for the flooding layer;

    \item
    \[
    \Pi=(f,\tau_{\mathrm{thresh}},B,\tau_{\mathrm{ttl}})
    \]
    specifies the Byzantine threshold $f$, the number of required threshold
    shares $\tau_{\mathrm{thresh}}$, the block size $B$, and the nonce
    time-to-live $\tau_{\mathrm{ttl}}$.
\end{itemize}
\end{Definition}

The key design principle of $\Phi$-PHE-BC is that every security guarantee and
performance bound derived in Sections~5--6 is an explicit function of the
components of $\mathcal{F}$ and, in particular, of the graph-theoretic
properties of $\mathcal{G}$.

\subsection{Cryptographic Primitives}
\label{subsec:primitives}

The framework comprises the following cryptographic building blocks. Full
parameter specifications are given by $\Lambda$.

\paragraph{Paillier PHE.}
Let $n=pq$, where $p$ and $q$ are primes, and let $g=n+1$. Define
\[
\lambda_{\mathrm{P}}
=
\operatorname{lcm}(p-1,q-1),
\]
\[
\mu
=
L\!\left(g^{\lambda_{\mathrm{P}}}\bmod n^2\right)^{-1}
\bmod n,
\]
where
\[
L(x)=\frac{x-1}{n}.
\]

The Paillier encryption of a message $m\in\mathbb{Z}_n$ is
\begin{equation}
\operatorname{Enc}(m;r)
=
g^m r^n
\bmod n^2,
\qquad
r\sample \mathbb{Z}_n^{*},
\label{eq:paillier-enc}
\end{equation}
and decryption is
\begin{equation}
\operatorname{Dec}(c)
=
L\!\left(c^{\lambda_{\mathrm{P}}}\bmod n^2\right)
\mu
\bmod n.
\label{eq:paillier-dec}
\end{equation}

Homomorphic addition follows from
\[
\operatorname{Dec}(c_1c_2\bmod n^2)
=
m_1+m_2,
\]
and scalar multiplication follows from
\[
\operatorname{Dec}(c^k\bmod n^2)
=
km.
\]
The confidentiality of Paillier encryption is based on the Decisional
Composite Residuosity (DCR) assumption.

\paragraph{Transaction Authentication.}
Each device signs its transaction using a standard EUF-CMA-secure digital
signature scheme. The signature binds the device identity, ciphertext,
weight, timestamp, and nonce, as specified in Section~\ref{subsec:protocol}.
The specific signature scheme is an implementation parameter of $\Lambda$ and
is treated abstractly in the security analysis. Section~\ref{subsec:limitations_future} discusses a
possible migration to lattice-based signatures as future work; such a
migration is not an achieved security property of the present architecture.

\paragraph{BFV Fully Homomorphic Encryption.}
BFV operates in the polynomial ring
\[
R_q
=
\mathbb{Z}_q[X]/
(X^{N_{\mathrm{BFV}}}+1).
\]
Encryption embeds a plaintext $m\in\mathbb{Z}_t$ using
\[
\Delta
=
\left\lfloor
\frac{q}{t}
\right\rfloor.
\]
A ciphertext may be represented as
\[
c=
\left(
bu+e_1+\Delta m,\;
au+e_2
\right),
\]
where the error and masking terms are sampled according to the configured BFV
distributions.

BFV supports both homomorphic addition and multiplication, enabling evaluation
of aggregation functions of polynomial degree up to
\[
L_{\max}
=
\left\lfloor
\frac{
\log_2(q/t)
}{
\log_2(B_{\chi}+1)
}
\right\rfloor,
\]
where
\[
B_{\chi}=6\sigma
\]
is the configured noise bound.

\paragraph{Noise Flooding.}
The threshold-share privacy layer is modeled through a configurable flooding
transformation. For validator $V_i$, let $e_i$ denote the secret-dependent
exponent used by the threshold-decryption procedure and let $F_{\mathrm{nf}}$
be the configured integer-valued flooding distribution. For any two admissible
exponents $e,e'$ in the configured exponent domain, the deployment parameters
must satisfy the statistical-hiding condition
\begin{equation}
\operatorname{SD}\!\left(F_{\mathrm{nf}}+e,\,F_{\mathrm{nf}}+e'\right)
\leq \varepsilon_{\mathrm{nf}},
\label{eq:nf_requirement}
\end{equation}
where $\varepsilon_{\mathrm{nf}}$ is an explicit deployment parameter. This
condition is information-theoretic: it is stated directly in statistical
distance and does not invoke a computational hardness assumption. The present
paper does not claim a universal closed-form value of
$\varepsilon_{\mathrm{nf}}$ from a Gaussian width alone; parameter selection
must be justified for the concrete flooding construction used by an
implementation.

\subsection{Cryptographic Layer Selection}
\label{subsec:layerselect}

A key contribution of $\Phi$-PHE-BC is the \textsc{LayerSelect} algorithm,
which automatically determines the minimal cryptographic layer set required to
achieve the security goals defined in Section~5 for a given device class and
aggregation function. This avoids both under-provisioning, which may leave a
required protection mechanism absent, and over-provisioning, which may impose
unnecessary cryptographic overhead on constrained devices.

\begin{algorithm}[t]
\caption{\textsc{LayerSelect}$(\mathcal{D},\Phi,\lambda)$}
\label{alg:layerselect}
\begin{algorithmic}[1]
\Require Device class $\mathcal{D}$, aggregation function $\Phi$, security parameter $\lambda$
\Ensure Ordered cryptographic layer set $\mathcal{L}$
\State $\mathcal{L}\gets\{\textsc{Paillier},\textsc{NoiseFlooding}\}$
\If{$\Phi\notin\mathcal{F}_{\mathrm{linear}}$}
    \State $\mathcal{L}\gets\mathcal{L}\cup\{\textsc{BFV-Bridge}\}$
\EndIf
\If{$\Phi$ requires floating-point arithmetic}
    \State $\mathcal{L}\gets\mathcal{L}\cup\{\textsc{CKKS}\}$
\EndIf
\If{$\mathcal{D}\geq\mathcal{D}_{\mathrm{II}}$}
    \State $\mathcal{L}\gets\mathcal{L}\cup\{\textsc{Signature-Auth}\}$
\EndIf
\If{$\mathcal{D}=\mathcal{D}_{\mathrm{III}}$}
    \State $\mathcal{L}\gets\mathcal{L}\cup\{\textsc{ZKP}\}$
\EndIf
\State \Return $\mathcal{L}$
\end{algorithmic}
\end{algorithm}

\begin{Proposition}[Completeness of \textsc{LayerSelect}]
\label{prop:layerselect}
For any $(\mathcal{D},\Phi,\lambda)$, the layer set
\[
\mathcal{L}
=
\textsc{LayerSelect}(\mathcal{D},\Phi,\lambda)
\]
is sufficient to instantiate the protection mechanisms required by the
framework security goals.
\end{Proposition}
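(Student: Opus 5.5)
The argument is a case analysis. First I will make the notion of "protection mechanisms required by the framework security goals" precise by writing down, for each input $(\mathcal{D},\Phi,\lambda)$, a requirement set $\mathcal{R}(\mathcal{D},\Phi)$. Each security goal is mapped to the mechanism that realizes it:
\begin{itemize}
\item aggregation confidentiality, to be proved IND-CPA under DCR in Section~5, requires \textsc{Paillier};
\item share privacy against unbounded adversaries requires \textsc{NoiseFlooding} with parameters satisfying \eqref{eq:nf_requirement};
\item evaluation of a non-linear $\Phi$ requires \textsc{BFV-Bridge};
\item floating-point $\Phi$ requires \textsc{CKKS};
\item EUF-CMA transaction integrity requires \textsc{Signature-Auth}, but only for devices able to sign, i.e.\ $\mathcal{D}\geq\mathcal{D}_{\mathrm{II}}$ by Definition~\ref{def:device};
\item ciphertext validity requires \textsc{ZKP}, but only for $\mathcal{D}_{\mathrm{III}}$, since the Fiat--Shamir proof is optional and restricted to capable devices.
\end{itemize}
With $\mathcal{R}$ fixed, the proposition reduces to the inclusion $\mathcal{R}(\mathcal{D},\Phi)\subseteq\textsc{LayerSelect}(\mathcal{D},\Phi,\lambda)$.

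I will check that inclusion requirement by requirement, following the lines of Algorithm~\ref{alg:layerselect}.
\begin{itemize}
\item Line~1 unconditionally adds \textsc{Paillier} and \textsc{NoiseFlooding}, which covers the two goals that apply to every instance.
\item The conditional at lines~2--4 adds \textsc{BFV-Bridge} exactly when $\Phi\notin\mathcal{F}_{\mathrm{linear}}$.
\item Lines~5--7 add \textsc{CKKS} whenever floating-point arithmetic is required.
\end{itemize}
Since Definition~\ref{def:framework} classifies every $\Phi$ as linear, polynomial of degree $\ell$, or floating-point, these branches are exhaustive. Linear $\Phi$ is handled natively by the Paillier homomorphisms $\operatorname{Dec}(c_1c_2)=m_1+m_2$ and $\operatorname{Dec}(c^k)=km$, so it needs no extra layer. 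For polynomial $\Phi$, I also need $\ell\le L_{\max}$ for the configured $\Lambda$. I will state this as a standing parameter-admissibility hypothesis and not as something the algorithm itself guarantees. The device-class branches at lines~8--13 are monotone in $\mathcal{D}$, so they match the class-dependent requirements exactly.

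The main obstacle is the $\mathcal{D}_{\mathrm{I}}$ case. There no signature layer is selected, yet the integrity goal is phrased as EUF-CMA for transactions. My plan is to scope the integrity requirement to the device classes that can sign, which Definition~\ref{def:device} supports by stating that $\mathcal{D}_{\mathrm{I}}$ devices cannot run the signature stack. Constrained devices are then treated as in-scope only for confidentiality and share privacy, or else as proxied by an authenticated gateway. I will also remark that "sufficient" here means the necessary mechanisms are present. The quantitative security follows only once the later theorems (the Paillier IND-CPA result, the noise-flooding result, and the signature result) are applied to the selected layers. The proposition is therefore a structural completeness statement, proved by the exhaustive case split above.
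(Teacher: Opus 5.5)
Your proposal follows the same decomposition as the paper's sketch. Each security goal is matched to the layer that realizes it: Paillier for confidentiality under DCR, flooding for the published shares, \textsc{BFV-Bridge} exactly when $\Phi\notin\mathcal{F}_{\mathrm{linear}}$ (because Paillier lacks ciphertext--ciphertext multiplication), and \textsc{Signature-Auth} for $\mathcal{D}\geq\mathcal{D}_{\mathrm{II}}$. Your explicit requirement set, your coverage of the \textsc{CKKS} and \textsc{ZKP} branches, and your standing hypothesis $\ell\le L_{\max}$ are refinements that the paper leaves implicit.

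The one real divergence is the $\mathcal{D}_{\mathrm{I}}$ case. The paper keeps transaction integrity as a goal for constrained devices and asserts that it is met by unspecified ``ciphertext-binding and validation mechanisms.'' You instead remove integrity from $\mathcal{D}_{\mathrm{I}}$'s requirement set. The paper's reading claims more, but none of those mechanisms is an element of $\mathcal{L}$. The only validation procedure the paper specifies (Phase~4) begins with a signature check, and the ZKP binding is restricted to $\mathcal{D}_{\mathrm{III}}$. So your scoped version is the one the algorithm actually supports.

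You should commit to that scoping and drop the authenticated-gateway alternative. A proxy is also a mechanism outside $\mathcal{L}$, so invoking it would undercut the claim that $\mathcal{L}$ alone suffices.
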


\begin{proof}[Proof sketch]
Transaction integrity is provided through an EUF-CMA-secure signature mechanism
whenever $\mathcal{D}\geq\mathcal{D}_{\mathrm{II}}$, while the constrained
device class uses the framework's corresponding ciphertext-binding and
validation mechanisms. Confidentiality of individual readings is provided by
Paillier encryption under the DCR assumption, while noise flooding protects the
published threshold-decryption shares under the statistical assumptions of the
protocol.

The BFV bridge is required only when
\[
\Phi\notin\mathcal{F}_{\mathrm{linear}},
\]
because Paillier alone supports additive operations and scalar multiplication
but does not directly support ciphertext--ciphertext multiplication required
for general polynomial functions of degree at least two.
\end{proof}

\subsection{Protocol Phases}
\label{subsec:protocol}

A $\Phi$-PHE-BC instance executes the following nine phases. The complete
Hyperledger Fabric chaincode realization is described in Section~8.

\paragraph{Phase 1: Device Registration.}
Each device $d_i\in\mathcal{V}_d$ generates a signature key pair
\[
(sk_i,pk_i)\leftarrow\operatorname{KeyGen}()
\]
and submits a certificate-signing request to the registration authority (RA).
The RA issues
\[
\operatorname{Cert}_i
=
\operatorname{Sign}_{\mathrm{RA}}(pk_i,ID_i),
\]
and the chaincode stores $\operatorname{Cert}_i$ under the key
\texttt{device:}$ID_i$ in the Fabric world state.

\paragraph{Phase 2: Sensing and Encryption.}
Device $d_i$ reads a sensor value
\[
v_i\in[0,n-1]
\]
and computes
\[
ct_i
=
\operatorname{Enc}(v_i;r_i)
\]
according to Equation~\eqref{eq:paillier-enc}. For
$\mathcal{D}_{\mathrm{II}}$ and $\mathcal{D}_{\mathrm{III}}$ devices, a
session key $K_{ij}$ may be established with the target validator $V_j$ using
an authenticated key-establishment mechanism. The ciphertext is then
transmitted inside the corresponding protected communication channel.

\paragraph{Phase 3: Transaction Submission.}
Device $d_i$ constructs
\[
T_i
=
(ID_i,ct_i,w_i,ts,\mathsf{nonce},\sigma_i),
\]
where $w_i$ is an integer-encoded weight,
\[
\mathsf{nonce}\sample\{0,1\}^{128}
\]
is a fresh nonce, and
\[
\sigma_i
=
\operatorname{Sign}_{sk_i}
\left(
H(ID_i\Vert ct_i\Vert\mathsf{nonce})
\right).
\]

\paragraph{Phase 4: Validator Verification.}
Upon receiving $T_i$, validator $V_j$ verifies:
\begin{enumerate}
    \item
    \[
    \operatorname{Verify}
    \left(
    pk_i,\sigma_i,
    H(ID_i\Vert ct_i\Vert\mathsf{nonce})
    \right)=1;
    \]

    \item $\operatorname{Cert}_i$ is valid under the RA public key;

    \item
    \[
    \mathsf{nonce}\notin\mathcal{N},
    \]
    where $\mathcal{N}$ denotes the set of previously recorded nonces; and

    \item
    \[
    |ts-t_{\mathrm{now}}|
    \leq
    \tau_{\mathrm{ttl}}.
    \]
\end{enumerate}

Transactions satisfying all checks are forwarded to the block-proposal phase.
The verification procedure is given in Algorithm~\ref{alg:verifytransaction}.

\paragraph{Phase 5: On-Chain Homomorphic Aggregation.}
The leader validator invokes the aggregation chaincode to compute
\[
ct_{\mathrm{agg}}
=
\prod_{i=1}^{n_d}
ct_i^{w_i}
\bmod n^2
\equiv
\operatorname{Enc}
\left(
\sum_{i=1}^{n_d}w_iv_i
\right).
\]
The computation is performed over ciphertexts, and individual plaintext sensor
readings are not stored in the blockchain world state.

\paragraph{Phase 6: BFT Consensus.}
The leader broadcasts the proposed block
\[
B_k=(H_k,T_1,\ldots,T_{n_d}),
\]
where
\[
H_k=
\left(
H(B_{k-1}),
\operatorname{MerkleRoot}(T_1,\ldots,T_{n_d}),
ts,
ct_{\mathrm{agg}},
\sigma_{\mathrm{leader}}
\right).
\]

Validators execute the configured PBFT-style consensus procedure. Under the
stated protocol assumptions, a block is committed after the required prepare
and commit quorums are obtained.

\paragraph{Phase 7: Threshold Decryption.}
The Paillier private key is distributed among validators using a
degree-$(\tau_{\mathrm{thresh}}-1)$ polynomial over the configured field. Each
validator $V_i$ first computes the ordinary threshold-decryption exponent
\[
e_i=2\delta sk_i,\qquad \delta=n_v!,
\]
and the deployment's flooding wrapper produces the published share transcript
from an exponent distributed as $e_i+F_{\mathrm{nf}}$. Reconstruction combines
at least $\tau_{\mathrm{thresh}}$ valid shares using the corresponding
integer-scaled Lagrange coefficients. Theorem~\ref{thm:noise_flooding} concerns
the privacy of the published flooded transcript only. The flooding wrapper is implemented using a correctness-preserving
mask-cancellation mechanism. Let the flooded exponent used by validator
$V_i$ be
\begin{equation}
\widetilde{e}_i = e_i + f_i,
\end{equation}
where $e_i = 2\delta s k_i$ is the ordinary threshold-decryption
exponent and $f_i$ is the locally generated flooding mask.

The published partial-decryption share is
\begin{equation}
\widetilde{d}_i
=
ct_{\mathrm{agg}}^{\,\widetilde{e}_i}
\bmod n^2.
\end{equation}

Before final plaintext recovery, the reconstruction procedure removes
the aggregate masking contribution according to the implemented
mask-cancellation rule. Denoting the integer-scaled Lagrange
coefficient of validator $V_i$ by $\lambda_i$, the combined masking
term is
\begin{equation}
F_{\mathrm{comb}}
=
\sum_{i\in S}
\lambda_i f_i,
\end{equation}
where $S$ is the set of participating validators.

The implementation cancels or removes the contribution corresponding
to $F_{\mathrm{comb}}$ before applying the final Paillier recovery
function. Consequently, the reconstructed value is identical to that
obtained from the unflooded threshold shares, while the individually
published shares remain statistically hidden according to the
condition of Theorem~\ref{thm:noise_flooding}.

\paragraph{Phase 8: ZKP Verification.}
For deployments involving $\mathcal{D}_{\mathrm{III}}$ devices, a device may
generate a Fiat--Shamir proof $\pi_i$ demonstrating knowledge of
$(v_i,r_i)$ such that
\[
\operatorname{Enc}(v_i;r_i)=ct_i.
\]
The aggregation chaincode verifies the configured proofs before aggregation,
thereby rejecting malformed ciphertexts according to the protocol rules.

\paragraph{Phase 9: Result Retrieval.}
An authorized querier invokes
\[
\operatorname{QueryLedger}(ID,h_{\mathrm{lo}},h_{\mathrm{hi}}),
\]
which returns the committed result and any associated verification proof,
enabling offline verification without repeating the complete decryption
procedure.

\begin{algorithm}[t]
\caption{Transaction Verification}
\label{alg:verifytransaction}
\begin{algorithmic}[1]
\Require Signed transaction $T_i=(ID_i,ct_i,w_i,ts,\mathsf{nonce},\sigma_i)$; nonce set $\mathcal{N}$; RA public key $pk_{\mathrm{RA}}$; current time $t_{\mathrm{now}}$; nonce time-to-live $\tau_{\mathrm{ttl}}$
\Ensure $T_i$ admitted to the proposal pool or rejected with an error
\If{$\operatorname{Verify}(pk_i,\sigma_i,H(ID_i\Vert ct_i\Vert\mathsf{nonce}))\neq1$}
    \State \Return \texttt{ERR\_INVALID\_SIGNATURE}
\EndIf
\If{$\operatorname{VerifyRA}(pk_{\mathrm{RA}},\operatorname{Cert}_i)\neq1$}
    \State \Return \texttt{ERR\_INVALID\_CERTIFICATE}
\EndIf
\If{$\mathsf{nonce}\in\mathcal{N}$}
    \State \Return \texttt{ERR\_REPLAY}
\EndIf
\If{$|ts-t_{\mathrm{now}}|>\tau_{\mathrm{ttl}}$}
    \State \Return \texttt{ERR\_STALE\_TIMESTAMP}
\EndIf
\State $\mathcal{N}\gets\mathcal{N}\cup\{\mathsf{nonce}\}$
\State \Return \texttt{ADMITTED}
\end{algorithmic}
\end{algorithm}

\subsection{The Paillier--BFV Bridge}
\label{subsec:bridge}

When the target aggregation function is not expressible using Paillier's
additive homomorphism, namely when
\[
\Phi\notin\mathcal{F}_{\mathrm{linear}},
\]
the \textsc{LayerSelect} algorithm adds the BFV bridge to $\mathcal{L}$.

\begin{Definition}[Paillier--BFV Bridge]
\label{def:bridge}
The bridge
\[
\mathcal{B}
=
(\operatorname{Gen},\operatorname{ReEnc},\operatorname{Verify})
\]
consists of:
\begin{itemize}
    \item \textbf{Gen:} executed inside a trusted execution environment (TEE)
    and producing a re-encryption key
    \[
    rk_{\mathrm{P}\rightarrow\mathrm{B}}
    =
    \operatorname{BFV.Enc}
    (H(sk_{\mathrm{P}});pk_{\mathrm{B}});
    \]

    \item \textbf{ReEnc:} given a Paillier ciphertext $ct_{\mathrm{P}}$,
    decrypting it inside the TEE boundary and producing
    \[
    ct_{\mathrm{B}}
    =
    \operatorname{BFV.Enc}
    \left(
    \operatorname{Paillier.Dec}(ct_{\mathrm{P}});
    pk_{\mathrm{B}}
    \right);
    \]

    \item \textbf{Verify:} a zero-knowledge verification mechanism attesting
    that the configured re-encryption procedure was executed correctly without
    revealing the underlying plaintext.
\end{itemize}
\end{Definition}

\begin{Proposition}[Bridge Necessity]
\label{prop:bridge-necessity}
There exists an aggregation function $\Phi$ of polynomial degree at least two
that is not computable under Paillier alone. Consequently, a mechanism enabling
transition to a multiplication-capable homomorphic scheme is required for
complete support of such functions.
\end{Proposition}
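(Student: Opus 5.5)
The plan is to take the concrete witness $\Phi(v_1,v_2)=v_1v_2$ (padded with zero dependence on the remaining coordinates), which has polynomial degree two. I will then show that no procedure restricted to the Paillier interface can compute $\operatorname{Enc}(v_1v_2)$ from $ct_1=\operatorname{Enc}(v_1;r_1)$ and $ct_2=\operatorname{Enc}(v_2;r_2)$. The procedure may use only public information and the homomorphic operations of Section~\ref{subsec:primitives}: ciphertext multiplication modulo $n^2$, exponentiation by public scalars, and fresh encryptions of public constants.

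The first step is a structural lemma. Any ciphertext reachable from $ct_1,ct_2$ by these operations has the form $ct_1^{a}ct_2^{b}\operatorname{Enc}(c;\rho)$ for public integers $a,b,c$. By the homomorphic identities $\operatorname{Dec}(c_1c_2)=m_1+m_2$ and $\operatorname{Dec}(c^k)=km$, it therefore decrypts to the affine function $av_1+bv_2+c \bmod n$. I will prove this by induction on the length of the operation sequence.

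The second step shows that the product is not of this form. If $v_1v_2\equiv av_1+bv_2+c \pmod n$ held for all $v_1,v_2\in\mathbb{Z}_n$, then evaluating at the four points $(0,0)$, $(1,0)$, $(0,1)$ and $(1,1)$ forces $c=0$, $a=0$, $b=0$, and finally $1\equiv 0$. This is a contradiction. Hence Paillier's algebraic interface computes only affine functions of the plaintexts, so $\Phi$ is not computable under Paillier alone.

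The main obstacle is making ``computable under Paillier alone'' precise. An adversarial evaluator could in principle manipulate the group elements of $\mathbb{Z}_{n^2}^*$ directly, outside the homomorphic API. I would handle this with a generic-group-style model: the evaluator sees only handles and may invoke only the Paillier operations, and the structural lemma above covers exactly this model. I would then remark that computing $\operatorname{Enc}(v_1v_2)$ from the raw ciphertexts in the standard model would amount to a multiplicative homomorphism for Paillier. Ruling that out needs an additional hardness conjecture, not DCR itself, so the claim is stated relative to the restricted model. The ``consequently'' clause then follows immediately: supporting such $\Phi$ requires leaving the Paillier interface, for example through the BFV bridge of Definition~\ref{def:bridge}.
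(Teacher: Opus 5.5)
Your proof is correct and uses the same witness as the paper, $\Phi(v_1,v_2)=v_1v_2$, but it is a substantively more complete argument. The paper's proof only observes that the single operation $c_1c_2\bmod n^2$ decrypts to $m_1+m_2$ rather than $m_1m_2$. It then concludes that the product is ``not directly computable using Paillier's additive operations alone.'' It never fixes what ``computable under Paillier alone'' means, and it does not exclude longer compositions of operations.

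You supply exactly these missing pieces:
\begin{itemize}
\item an explicit handle-based evaluation model;
\item an induction showing that every reachable ciphertext decrypts to $av_1+bv_2+c\bmod n$;
\item the four-point evaluation showing that $v_1v_2$ is not affine modulo $n>1$.
\end{itemize}
Two small points should be stated explicitly. First, because the evaluator sees only handles, its operation sequence, and hence $(a,b,c)$, is independent of the plaintexts once its coins are fixed. Second, if it also receives the other devices' ciphertexts, the lemma extends verbatim to an affine function of all $v_i$, and the same four points with the remaining coordinates set to $0$ still give the contradiction.

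The paper's version buys brevity and the design intuition. Yours buys an actual impossibility statement, together with the honest note that it is model-relative. That caveat is unavoidable, since in the standard model anyone who can factor $n$ can decrypt, multiply and re-encrypt.

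Your reading of the ``consequently'' clause, that supporting such $\Phi$ requires leaving the Paillier interface, is also the defensible one. The statement's stronger wording, that a transition to a multiplication-capable homomorphic scheme is required, follows from neither argument. Decrypt--multiply--re-encrypt under Paillier inside the TEE would also compute the product, as would an interactive masked-decryption multiplication among the threshold key holders. Since validators in $\Phi$-PHE-BC hold decryption shares, you should say explicitly that your model grants the evaluator no decryption access.
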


\begin{proof}
Paillier supports additive homomorphism. For ciphertexts $c_1$ and $c_2$,
\[
c_1c_2\bmod n^2
\]
decrypts to $m_1+m_2$, but this operation does not directly produce an
encryption of $m_1m_2$. Thus,
\[
\Phi(m_1,m_2)=m_1m_2
\]
is not directly computable using Paillier's additive operations alone. BFV
supports homomorphic multiplication and can therefore evaluate such
polynomial computations subject to its parameter and noise budget. The bridge
enables the transition of the protected value to the BFV computation domain.
\end{proof}

\begin{Proposition}[Bridge IND-CPA Security Conditional on TEE Integrity]
\label{prop:bridge-security}
If the TEE hardware and attestation mechanism are not compromised by the
out-of-scope attack classes defined in the TEE threat model, and the BFV
encryption scheme satisfies IND-CPA security under its underlying lattice-based
hardness assumption, then the bridge provides IND-CPA confidentiality against
the adversary model of Definition~\ref{def:adversary}, subject to the stated
TEE trust assumption.
\end{Proposition}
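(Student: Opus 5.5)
The plan is a hybrid (game-hopping) reduction that turns the bridge's IND-CPA game into the BFV IND-CPA game, with the TEE assumption used to switch the TEE into an ideal functionality. Everything the adversary $\mathcal{A}(\lambda,f,q_e,q_h)$ of Definition~\ref{def:adversary} sees from the bridge is one of three things:
\begin{itemize}
    \item the Paillier ciphertext $ct_{\mathrm{P}}$ fed into $\operatorname{ReEnc}$;
    \item the re-encryption key $rk_{\mathrm{P}\rightarrow\mathrm{B}}=\operatorname{BFV.Enc}(H(sk_{\mathrm{P}});pk_{\mathrm{B}})$;
    \item the output $ct_{\mathrm{B}}=\operatorname{BFV.Enc}(\operatorname{Paillier.Dec}(ct_{\mathrm{P}});pk_{\mathrm{B}})$, together with the $\operatorname{Verify}$ transcript.
\end{itemize}
I would define the bridge IND-CPA game as follows. $\mathcal{A}$ submits $m_0,m_1$, the challenger encrypts $m_b$ under Paillier and pushes it through $\operatorname{ReEnc}$, and $\mathcal{A}$ gets the whole public view (including the Byzantine validators' view of TEE inputs and outputs) and must guess $b$.

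\textbf{Game 0} is the real execution. \textbf{Game 1} replaces the TEE with an ideal functionality $\mathcal{F}_{\mathrm{TEE}}$. This functionality takes $ct_{\mathrm{P}}$, computes internally, and emits only $rk$, $ct_{\mathrm{B}}$ and an attestation; $sk_{\mathrm{P}}$ and intermediate plaintexts never leave it. The distance between Games 0 and 1 is exactly the advantage of breaking TEE isolation or attestation within the stated threat model, which the hypothesis assumes away. \textbf{Game 2} replaces the $\operatorname{Verify}$ transcript with a simulated one. This uses the zero-knowledge property of the configured verification mechanism: in the random-oracle model with $q_h$ queries, the Fiat--Shamir simulator programs $H$, which costs an additive $O(q_h/2^{\lambda})$ term. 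Game 2 is now a function of $\mathcal{A}$'s input, $rk$ and $ct_{\mathrm{B}}$ only.

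\textbf{Game 3} replaces $rk_{\mathrm{P}\rightarrow\mathrm{B}}$ by $\operatorname{BFV.Enc}(0;pk_{\mathrm{B}})$, and \textbf{Game 4} replaces $ct_{\mathrm{B}}$ by $\operatorname{BFV.Enc}(m_0;pk_{\mathrm{B}})$ regardless of $b$. Each hop is a direct reduction to BFV IND-CPA. The reduction $\mathcal{B}$ receives $pk_{\mathrm{B}}$ and generates the Paillier keys itself, which is legitimate since in Game 2 nothing about $sk_{\mathrm{P}}$ is exposed except through BFV ciphertexts. It then embeds the BFV challenge ciphertext in the relevant slot and simulates $q_e$ encryption-oracle queries with the public keys. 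In Game 4 the view is independent of $b$. Paillier's own ciphertext $ct_{\mathrm{P}}$ is also in the view, and I would handle it with one more hop using DCR (IND-CPA of Paillier, as invoked in Section~5), or fold it into Game 4. Summing the hops gives
\[
\mathrm{Adv}^{\mathrm{ind\text{-}cpa}}_{\mathcal{B}}\le 2\,\mathrm{Adv}^{\mathrm{ind\text{-}cpa}}_{\mathrm{BFV}}+\mathrm{Adv}^{\mathrm{ind\text{-}cpa}}_{\mathrm{Paillier}}+\epsilon_{\mathrm{TEE}}+\epsilon_{\mathrm{zk}},
\]
where $\epsilon_{\mathrm{TEE}}=0$ under the hypothesis.

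The main obstacle is the Game 0$\to$1 hop together with the fact that $rk$ encrypts $H(sk_{\mathrm{P}})$, a key-dependent message. A BFV IND-CPA reduction cannot naively produce $rk$ without knowing $sk_{\mathrm{P}}$, and circularity would arise if $sk_{\mathrm{B}}$ were related to $sk_{\mathrm{P}}$. I would resolve this by generating the two key pairs independently, with $\mathcal{B}$ holding $sk_{\mathrm{P}}$ itself, so no circular-security assumption is needed. I would also make the TEE threat model explicit enough that the ideal-functionality substitution is a stated assumption rather than something derived. If the Byzantine validators can observe timing or side channels of the TEE, the statement excludes these as out-of-scope attack classes, and I would point to that exclusion explicitly.
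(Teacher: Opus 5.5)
Your proposal is correct and is essentially the paper's argument made rigorous: the paper's sketch rests on three ingredients, and these are exactly your Games~1, 2 and~4. They are the TEE keeping $\operatorname{Paillier.Dec}(ct_{\mathrm{P}})$ inside the enclave, the zero-knowledge property of $\operatorname{Verify}$, and BFV IND-CPA for $ct_{\mathrm{B}}$. The main difference is that the paper asserts the adversary sees only $ct_{\mathrm{B}}$ and the proof, whereas you also account for $rk_{\mathrm{P}\rightarrow\mathrm{B}}$ and for the public $ct_{\mathrm{P}}$; the first needs a second BFV hop, harmless because the keys are independent, and the second needs a DCR hop placed after the $rk$ hop so the reduction never needs $sk_{\mathrm{P}}$. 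This makes explicit that the bound also uses DCR, which is not among the proposition's listed hypotheses but is built into the adversary model of Definition~\ref{def:adversary} via Theorem~\ref{thm:paillier_ind_cpa}.
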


\begin{proof}[Proof sketch]
The output $ct_{\mathrm{B}}$ is a BFV encryption under $pk_{\mathrm{B}}$.
The intermediate plaintext
\[
m=
\operatorname{Paillier.Dec}(ct_{\mathrm{P}})
\]
remains within the trusted execution boundary. The adversary observes only the
resulting ciphertext and the configured verification proof. Under the stated
zero-knowledge and encryption-security assumptions, these observations do not
reveal the plaintext beyond the information permitted by the protocol.
\end{proof}

\paragraph{TEE Threat Model.}
The Paillier--BFV bridge relies on a trusted execution environment, such as an
Intel SGX- or ARM TrustZone-class environment, to perform
\textsc{ReEnc} without exposing the intermediate plaintext. The in-scope
adversary includes a network-level adversary with Dolev--Yao control of
communication channels and a compromised host operating system, hypervisor, or
Fabric peer process attempting to inspect protected enclave memory, intercept
the re-encryption process, or tamper with the bridge control flow.

The following attacks are outside the security guarantee of the bridge:
classical enclave side-channel attacks, including cache-timing, power-analysis,
and speculative-execution attacks; physical fault injection or voltage
glitching targeting the enclave hardware; and rollback attacks that replay
stale enclave state when no trusted monotonic counter or equivalent rollback
protection is available. Proposition~\ref{prop:bridge-security} is therefore
conditional on the integrity and confidentiality properties provided by the
TEE and its attestation mechanism against the assumed attack model.

\subsection{System Architecture Overview}
\label{subsec:architecture}

Figure~\ref{fig:architecture} illustrates the end-to-end system architecture
of a $\Phi$-PHE-BC instance. IoT devices form the bottom layer and encrypt
sensor readings using Paillier before submitting authenticated transactions.
Validator nodes form the middle layer: they verify transactions, execute
on-chain homomorphic aggregation through Fabric chaincode, participate in BFT
consensus, and contribute threshold-decryption shares. The Fabric ordering
service and peer nodes form the blockchain layer and persist committed blocks
and the encrypted world state. The optional Paillier--BFV bridge is activated
only when
\[
\Phi\notin\mathcal{F}_{\mathrm{linear}},
\]
and operates within a TEE associated with a designated validator. Result
consumers interact through the retrieval interface and obtain the committed
aggregate together with any configured verification evidence.

\begin{figure*}[t]
    \centering
    \includegraphics[width=\textwidth]{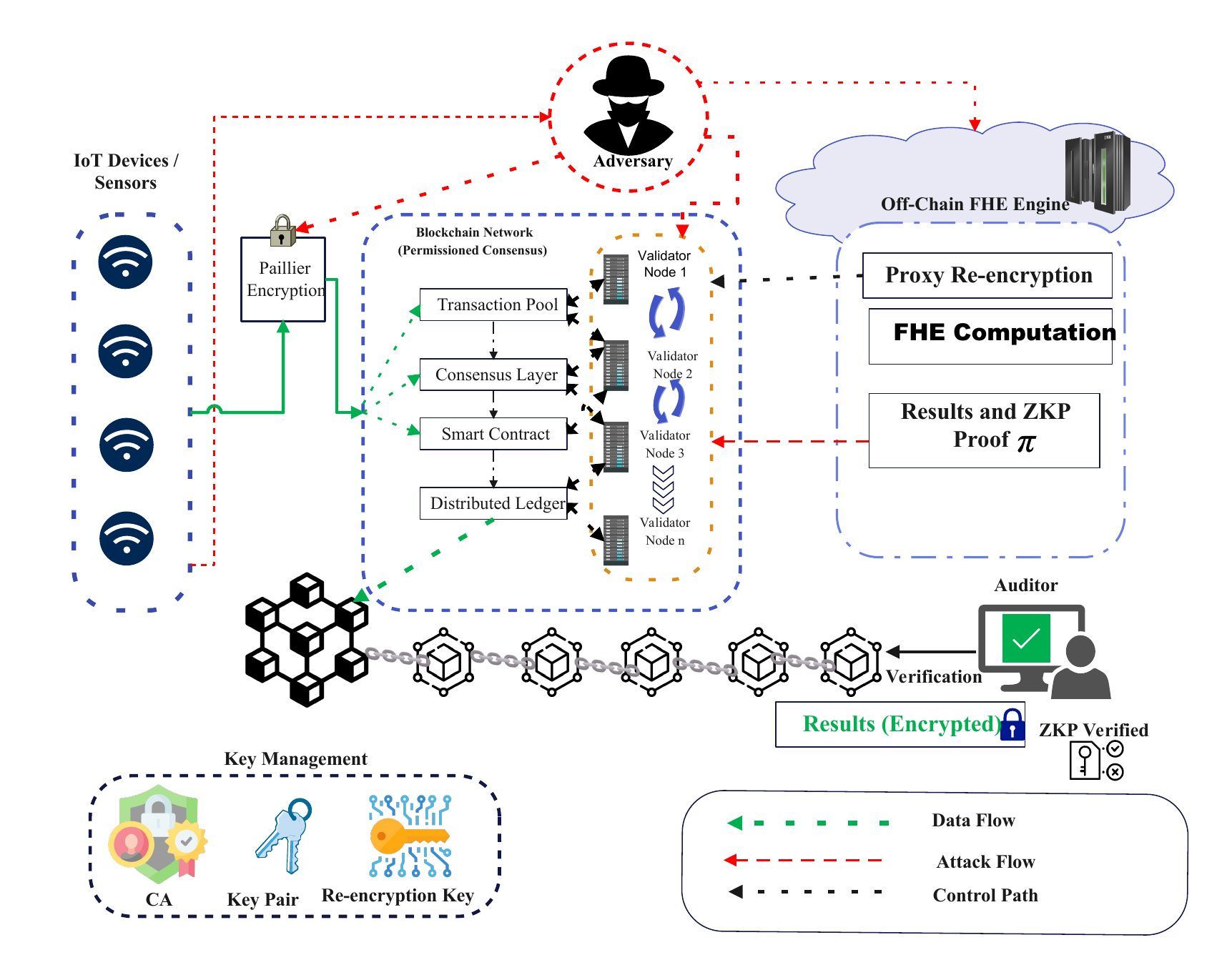}
    \caption{End-to-end system architecture of the proposed
    $\Phi$-PHE-BC framework, showing the IoT device layer, validator layer,
    Hyperledger Fabric blockchain layer, optional TEE-based Paillier--BFV
    bridge, threshold-decryption process, and result-retrieval interface.}
    \label{fig:architecture}
\end{figure*}

Figure~\ref{fig:workflow} shows the operation of the proposed system. During
setup and registration, IoT devices generate key pairs and certificate-signing
requests, after which the certification and registration authority issues
identity-bound certificates and publishes the public parameters required for
protocol execution. Each device encrypts its sensed value $v_i$ as
\[
ct_i=\operatorname{Enc}(v_i;r_i)
\]
and transmits
\[
T_i=(ID_i,ct_i,w_i,ts,\mathsf{nonce},\sigma_i),
\]
where the identity, freshness, weighting, and authentication metadata support
secure transaction validation.

After identity and transaction verification, validators compute
\[
ct_{\mathrm{agg}}
=
\prod_{i=1}^{n_d}
ct_i^{w_i}
\bmod n^2,
\]
thereby performing weighted aggregation over encrypted data without disclosing
individual sensor readings. When the target function is outside the linear
computation domain, the workflow activates the optional BFV bridge within the
TEE to support subsequent homomorphic computation.

The resulting aggregate is committed to the blockchain together with block
metadata, Merkle-root integrity information, timestamps, and leader
authentication data. Network agreement follows the configured PBFT-style
consensus procedure. Finally, an authorized querier obtains the aggregate
through threshold decryption and may receive associated zero-knowledge
verification evidence, where such verification is enabled.

\begin{figure*}[t]
\centering
\includegraphics[width=\textwidth]{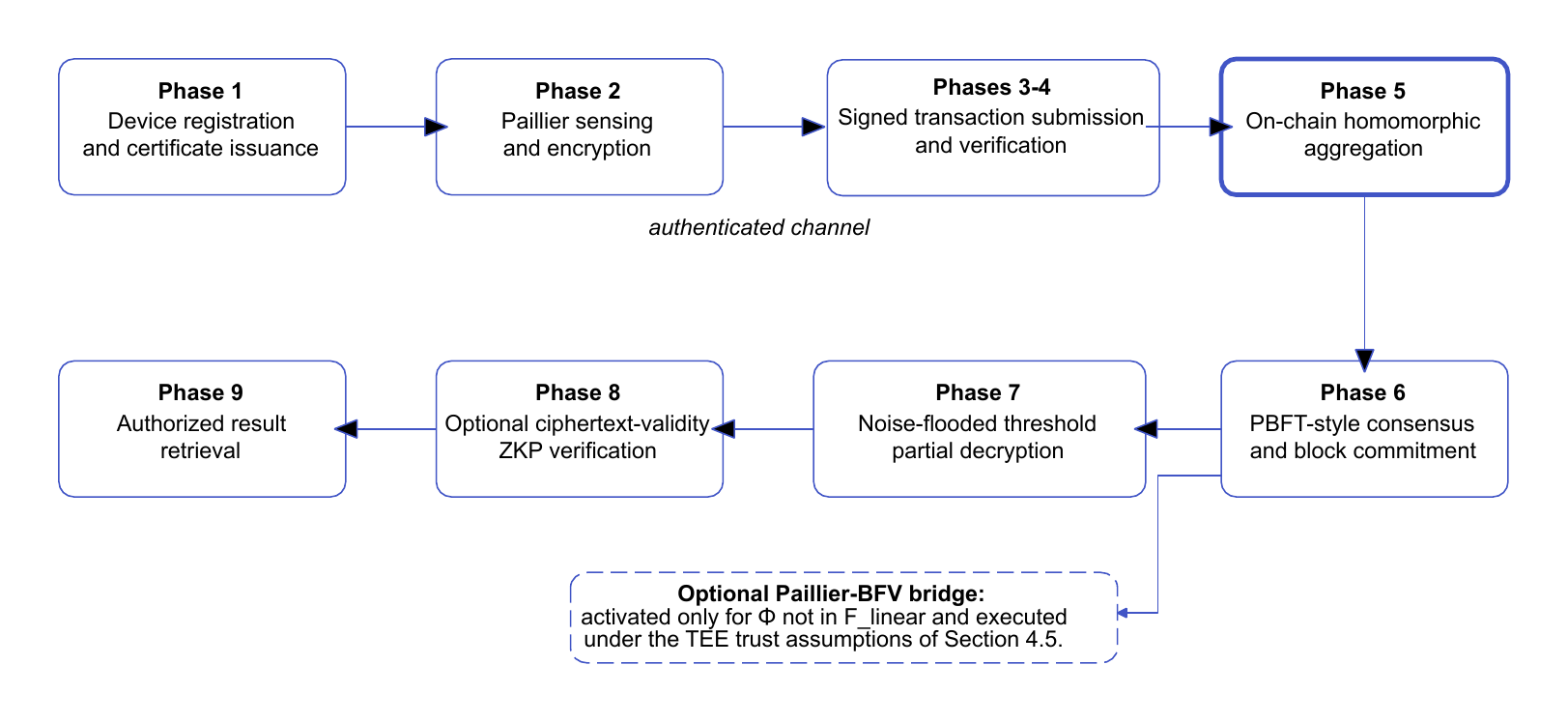}
\caption{Operational workflow of the evaluated $\Phi$-PHE-BC architecture. The present implementation uses classical transaction authentication and key establishment}
\label{fig:workflow}
\end{figure*}
\section{Security Analysis}
\label{sec:security_analysis}

We establish the confidentiality and transaction-integrity properties of
every $\Phi$-PHE-BC instance satisfying
$L=\mathrm{LAYERSELECT}(D,\Phi,\lambda)$ against the adversary
$\mathcal{A}(\lambda,f,q_e,q_h)$ defined in Section~\ref{subsec:adversary}.
The security guarantees in this section are deliberately separated according
to their underlying assumptions. In particular, confidentiality of Paillier
ciphertexts and transaction integrity rely on classical computational
assumptions, whereas the statistical protection provided by the noise-flooding
mechanism is independent of the computational power of the adversary.

\setcounter{Theorem}{3}
\begin{Theorem}[Paillier IND-CPA Security]
\label{thm:paillier_ind_cpa}
For any probabilistic polynomial-time classical adversary $\mathcal{A}_0$,
the IND-CPA advantage against the Paillier cryptosystem is bounded by
\begin{equation}
\operatorname{Adv}^{\mathrm{IND\text{-}CPA}}_{\mathrm{Paillier}}(\mathcal{A}_0)
\leq
2\operatorname{Adv}^{\mathrm{DCR}}_{n}(\mathcal{A}'_0),
\label{eq:paillier_ind_cpa}
\end{equation}
where $\mathcal{A}'_0$ is an adversary against the Decisional Composite
Residuosity (DCR) assumption with comparable running time, up to the
overhead of the reduction.
\end{Theorem}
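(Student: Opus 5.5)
The plan is the standard hybrid reduction from Paillier's original analysis. We write the challenge ciphertext as $c^\ast = g^{m_b} r^n \bmod n^2$ and replace the masking factor $r^n$ by a uniformly random unit of $\mathbb{Z}_{n^2}^{*}$.

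\textbf{Step 1 (the DCR distinguisher).} We construct $\mathcal{A}'_0$ as follows. It receives $(n, z)$, where $z$ is either a random $n$-th residue $r^n \bmod n^2$ with $r \sample \mathbb{Z}_n^{*}$, or a uniform element of $\mathbb{Z}_{n^2}^{*}$. It gives $\mathcal{A}_0$ the public key $(n, g=n+1)$ and receives the challenge messages $m_0,m_1\in\mathbb{Z}_n$. It then samples $b\sample\{0,1\}$, returns $c^\ast = g^{m_b} z \bmod n^2$, and outputs $1$ exactly when $\mathcal{A}_0$ guesses $b$ correctly. Encryption-oracle queries are answered directly, since encryption is public-key.

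\textbf{Step 2 (real case).} If $z = r^n$, then $c^\ast$ is distributed exactly as $\operatorname{Enc}(m_b;r)$ from Equation~\eqref{eq:paillier-enc}. Hence
\[
\Pr[\mathcal{A}'_0 \to 1] = \tfrac12 + \tfrac12\operatorname{Adv}^{\mathrm{IND\text{-}CPA}}_{\mathrm{Paillier}}(\mathcal{A}_0),
\]
using the convention $\operatorname{Adv} = |2\Pr[\text{win}]-1|$.

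\textbf{Step 3 (random case, the main obstacle).} We must show that $g^{m_b} z$ is independent of $b$ when $z$ is uniform on $\mathbb{Z}_{n^2}^{*}$. This is where the group structure is used. Since $\gcd(g,n^2)=1$, the element $g^{m_b}$ is a unit, so multiplication by it is a bijection of $\mathbb{Z}_{n^2}^{*}$. Therefore $g^{m_b}z$ is uniform regardless of $b$, and $\Pr[\mathcal{A}'_0\to 1]=\tfrac12$ exactly. Care is needed on one point: the DCR "random" distribution is taken to be uniform over all of $\mathbb{Z}_{n^2}^{*}$, not over non-residues. If a formulation with non-residues were used, there would be an additional statistical term of order $1/n$, which is negligible.

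\textbf{Step 4 (conclusion).} Subtracting the two cases gives
\[
\operatorname{Adv}^{\mathrm{DCR}}_n(\mathcal{A}'_0) = \tfrac12\operatorname{Adv}^{\mathrm{IND\text{-}CPA}}(\mathcal{A}_0),
\]
which yields the factor $2$ in Equation~\eqref{eq:paillier_ind_cpa}. The running time of $\mathcal{A}'_0$ is that of $\mathcal{A}_0$ plus one modular exponentiation $g^{m_b}=1+m_b n \bmod n^2$, and $q_e$ encryptions if the oracle is simulated explicitly. This gives comparable running time.

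Finally, the result extends to the on-chain aggregate of Phase~5. The ciphertext $ct_{\mathrm{agg}}$ is a public function of the individual $ct_i$, so a standard hybrid over the $n_d$ device ciphertexts reduces multi-ciphertext indistinguishability to the single-challenge bound above, at a cost of a factor $n_d$. This part is not required for the statement as written.
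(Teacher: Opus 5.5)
The paper gives no proof of Theorem~4. It states the result as Paillier's classical theorem, cites the DCR basis only in Section~4.2, and invokes it in Hybrid $H_2$ of Theorem~6. Your argument is the standard reduction the paper takes for granted, and it is correct: $g^{m_b}z$ is uniform on $\mathbb{Z}_{n^2}^{*}$ because $g=n+1$ is a unit, and under the convention $\operatorname{Adv}^{\mathrm{IND\text{-}CPA}}=|2\Pr[\mathrm{win}]-1|$ the factor $2$ comes out exactly.

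One cosmetic fix: in Step~2, write $\Pr[\mathcal{A}'_0\to 1]=\Pr[\mathcal{A}_0\text{ wins}]$ and take absolute values only in Step~4, since $2\Pr[\mathrm{win}]-1$ can be negative.
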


\noindent
Theorem~\ref{thm:paillier_ind_cpa} establishes the confidentiality baseline
for on-chain homomorphic aggregation. Under the DCR assumption, an efficient
classical adversary cannot distinguish encryptions of two chosen plaintexts
with non-negligible advantage. Consequently, the individual ciphertexts
$\mathit{ct}_i$ and the aggregate ciphertext $\mathit{ct}_{\mathrm{agg}}$
remain confidential against classical polynomial-time adversaries.

\begin{Theorem}[Noise-Flooding Protection]
\label{thm:noise_flooding}
Let $e_0$ and $e_1$ be any two admissible secret-dependent threshold exponents,
and let $F_{\mathrm{nf}}$ be a flooding distribution satisfying
Eq.~\eqref{eq:nf_requirement}. Define
\begin{equation}
\widehat e_b=e_b+f,\qquad f\leftarrow F_{\mathrm{nf}},\qquad b\in\{0,1\}.
\end{equation}
Then
\begin{equation}
\operatorname{SD}\!\left(D_{\widehat e_0},D_{\widehat e_1}\right)
\leq \varepsilon_{\mathrm{nf}}.
\label{eq:noise_flooding_sd}
\end{equation}
For any fixed aggregate ciphertext $ct_{\mathrm{agg}}$, let
\begin{equation}
\widehat d_b = ct_{\mathrm{agg}}^{\widehat e_b}\bmod n^2.
\end{equation}
Then the corresponding published partial-decryption distributions satisfy
\begin{equation}
\operatorname{SD}\!\left(D_{\widehat d_0},D_{\widehat d_1}\right)
\leq \varepsilon_{\mathrm{nf}}.
\end{equation}
\end{Theorem}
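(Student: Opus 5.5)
The proof has two parts. The first bound follows directly from the hypothesis. The second follows from the data-processing inequality for statistical distance.

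For the first claim, I will note that $D_{\widehat e_b}$ is by definition the law of $F_{\mathrm{nf}}+e_b$, with $e_b$ fixed and $f\leftarrow F_{\mathrm{nf}}$. Since $e_0,e_1$ are admissible exponents in the configured domain, Eq.~\eqref{eq:nf_requirement} with $e=e_0$, $e'=e_1$ gives
\[
\operatorname{SD}(D_{\widehat e_0},D_{\widehat e_1})
=\operatorname{SD}(F_{\mathrm{nf}}+e_0,F_{\mathrm{nf}}+e_1)\le\varepsilon_{\mathrm{nf}}.
\]
No computational assumption enters here, which is what makes the guarantee information-theoretic.

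For the second claim, I will fix $ct_{\mathrm{agg}}$ together with the modulus $n$. I will then define the deterministic map $\phi:\mathbb{Z}\to\mathbb{Z}_{n^2}$ by $\phi(x)=ct_{\mathrm{agg}}^{x}\bmod n^2$. This map depends only on public data and not on $b$.

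If the flooding distribution can produce negative exponents, I will require $ct_{\mathrm{agg}}\in\mathbb{Z}_{n^2}^{*}$ so that $\phi$ is well defined. This holds except with negligible probability for honestly formed ciphertexts. Alternatively, I will restrict to a nonnegative exponent domain.

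Then $\widehat d_b=\phi(\widehat e_b)$, so $D_{\widehat d_b}$ is the pushforward $\phi_{*}D_{\widehat e_b}$. I will prove the data-processing inequality for discrete distributions directly:
\[
\operatorname{SD}(\phi_*P,\phi_*Q)=\tfrac12\sum_y\Bigl|\sum_{x\in\phi^{-1}(y)}(P(x)-Q(x))\Bigr|
\le\tfrac12\sum_y\sum_{x\in\phi^{-1}(y)}|P(x)-Q(x)|=\operatorname{SD}(P,Q).
\]
Combining this with the first part gives the stated bound $\operatorname{SD}(D_{\widehat d_0},D_{\widehat d_1})\le\varepsilon_{\mathrm{nf}}$.

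I will also remark on two extensions. The same argument covers any randomized post-processing that is independent of $b$, including any public transcript derived from $\widehat d_b$. Consequently, any distinguisher, even an unbounded one, has advantage at most $\varepsilon_{\mathrm{nf}}$.

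The main obstacle is scoping, not calculation. The bound only holds if the quantity being flooded, the exponent, is the only source of dependence on $b$. So I will state explicitly that $ct_{\mathrm{agg}}$, $n$, and any other published values are fixed and independent of the choice between $e_0$ and $e_1$. I will also state that the flooding mask $f$ is fresh and independent of $e_b$.

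I will further note that the theorem concerns a single published share. Extending it to several shares or sessions would require a hybrid argument costing $\varepsilon_{\mathrm{nf}}$ per share. The same hybrid argument would need care with the mask-cancellation data used in Phase~7, which the statement explicitly excludes.
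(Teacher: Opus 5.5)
Your proposal is correct and follows the same route as the paper. The first bound is read off directly from the statistical-hiding condition in Eq.~\eqref{eq:nf_requirement}, and the second follows from the data-processing inequality applied to the deterministic, $b$-independent map $x\mapsto ct_{\mathrm{agg}}^{x}\bmod n^2$; your explicit proof of that inequality and your well-definedness caveat for negative exponents (requiring $ct_{\mathrm{agg}}\in\mathbb{Z}_{n^2}^{*}$) only spell out details the paper leaves implicit.
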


\begin{proof}
The first inequality is exactly the configured statistical-hiding condition of
Eq.~\eqref{eq:nf_requirement}. The map
\[
x\mapsto ct_{\mathrm{agg}}^x\bmod n^2
\]
is deterministic. By the data-processing inequality for statistical distance,
deterministic post-processing cannot increase statistical distance. Therefore,
applying this map to the two flooded exponent distributions preserves the upper
bound $\varepsilon_{\mathrm{nf}}$.
\end{proof}

\begin{Remark}
Theorem~\ref{thm:noise_flooding} is an information-theoretic privacy statement:
its conclusion is expressed in statistical distance and therefore does not
rely on DCR, ECDSA, or any computational limitation of the adversary. It applies
only to the flooded threshold-share transcript under the explicitly configured
condition in Eq.~\eqref{eq:nf_requirement}. 
\end{Remark}

\begin{Theorem}[End-to-End Classical Confidentiality Bound]
\label{thm:composable_ind_cpa}
Let
\begin{equation}
F=(G,D,\Phi,\Lambda,\Pi)
\end{equation}
be a $\Phi$-PHE-BC instance with
\begin{equation}
L=\mathrm{LAYERSELECT}(D,\Phi,\lambda).
\end{equation}
For any probabilistic polynomial-time classical adversary
$\mathcal{A}(\lambda,f,q_e,q_h)$, the confidentiality advantage is bounded by
\begin{equation}
\mathrm{Adv}^{\mathrm{FIND\text{-}CPA}}(\mathcal{A}) \leq \mathrm{Adv}^{\mathrm{nDCR}}(\mathcal{A}') + n_v \, \varepsilon_{nf} + \mathrm{negl}(\lambda).
\label{eq:composable_ind_cpa}
\end{equation}
\end{Theorem}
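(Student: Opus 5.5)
The plan is a standard hybrid argument over the adversary's full view in the FIND-CPA experiment. That view consists of the on-chain transcript: the individual ciphertexts $ct_i$, the aggregate $ct_{\mathrm{agg}}$, the signatures and block metadata, and the published flooded partial-decryption shares $\widehat d_i$ of the honest validators. The adversary also holds the secret exponents $e_j$ of the at most $f$ corrupted validators. Let $\mathsf{H}_0$ be the real experiment with challenge bit $b$. The goal is a final hybrid in which the view is independent of $b$, so that the advantage is the sum of the per-hop distances.

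\textbf{Step 1 (flooding hops).} Starting from $\mathsf{H}_0$, define $\mathsf{H}_1,\dots,\mathsf{H}_{n_v}$ by replacing, one validator at a time, the flooded share $ct_{\mathrm{agg}}^{e_i+f_i}$ with $ct_{\mathrm{agg}}^{e'+f_i}$. Here $e'$ is a fixed admissible exponent that is independent of the secret key; for corrupted validators the hop is vacuous. Each hop changes the view by at most $\varepsilon_{\mathrm{nf}}$ by Theorem~\ref{thm:noise_flooding}, which applies to any fixed $ct_{\mathrm{agg}}$. Conditioning on $ct_{\mathrm{agg}}$ and on all other components, then applying the data-processing inequality to the adversary's processing of the remaining view, gives $|\Pr[\mathsf{H}_{i}]-\Pr[\mathsf{H}_{i-1}]|\le\varepsilon_{\mathrm{nf}}$. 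Summing over the hops yields the $n_v\,\varepsilon_{\mathrm{nf}}$ term. After this step the shares can be simulated from public data and the public $ct_{\mathrm{agg}}$ alone, so the reduction below never needs the Paillier secret key.

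\textbf{Step 2 (signatures and metadata).} The signatures, certificates, nonces and Merkle roots are all computable from the device and RA signing keys, which the reduction generates itself. They therefore carry no information about $b$ beyond the ciphertexts. Any failure events, such as nonce collisions among $2^{128}$ values or hash collisions under $q_h$ random-oracle queries, are absorbed into $\mathrm{negl}(\lambda)$.

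\textbf{Step 3 (multi-ciphertext DCR hop).} In the resulting hybrid, the challenge data are the $n_d$ Paillier ciphertexts of the left or right reading vectors under a single modulus $n$. I would reduce to the $n$-fold DCR assumption (nDCR). Given a tuple $(z_1,\dots,z_{n_d})$ that is either all random $n$-th residues or all uniform in $\mathbb{Z}_{n^2}^*$, the reduction sets $ct_i=g^{v_i^{(b)}}z_i$. In the residue case this is a perfect simulation of the hybrid. In the uniform case each $ct_i$ is uniform and independent of $b$, and $ct_{\mathrm{agg}}$ is obtained homomorphically from the $ct_i$, so it is also independent of $b$. This gives the $\mathrm{Adv}^{\mathrm{nDCR}}(\mathcal{A}')$ term. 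The factor $2$ of Theorem~\ref{thm:paillier_ind_cpa} is absorbed by defining the advantage as the distinguishing gap between the two challenge worlds. If the optional BFV bridge is active, one further hop uses Proposition~\ref{prop:bridge-security}; that cost is bundled into $\mathrm{negl}(\lambda)$ under the TEE assumption.

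\textbf{Main obstacle.} Step 1 is where I expect the argument to be delicate. The flooded exponent of a corrupted validator is known to the adversary, and the published shares jointly reconstruct the decryption of $ct_{\mathrm{agg}}$ after mask cancellation. The hybrid must therefore be consistent with what the security game legitimately allows the adversary to learn. I would handle this by restricting the game so that the aggregate plaintext is equal in both worlds, which is the standard FIND-CPA admissibility condition. The switched exponents $e'$ would then be chosen by sampling a simulated sharing consistent with the corrupted parties' shares and with that common aggregate.
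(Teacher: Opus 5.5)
Your argument has the same skeleton as the paper's proof. There is a flooding hybrid charged $n_v\varepsilon_{\mathrm{nf}}$ by summing Theorem~\ref{thm:noise_flooding} over the published shares, then a DCR hybrid that removes the dependence on $b$, with everything else in $\mathrm{negl}(\lambda)$. It is more careful than the paper's two-hybrid sketch on points that matter:
\begin{itemize}
\item Making the honest shares key-independent is what lets the DCR reduction run without the factorization of $n$. The paper's $H_1\to H_2$ step never says how the shares are produced.
\item The $n_d$-fold hop is what the $\mathrm{Adv}^{\mathrm{nDCR}}$ term requires. The paper invokes the single-ciphertext Theorem~\ref{thm:paillier_ind_cpa} and drops its factor $2$.
\item The equal-aggregate restriction is unavoidable, since Phases~7 and~9 reveal $\sum_i w_iv_i$.
\end{itemize}
That restriction must hold for every ciphertext on which honest validators release shares, so your game also has to pin down the aggregated set and weights. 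A Byzantine leader shapes block content, and $\sigma_i$ in Phase~3 does not sign $w_i$. Otherwise weights $(1,0,\dots,0)$ would turn Phase~7 into a decryption oracle for a single reading.

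You are right that Step~1 is the weak point, and as written it fails.
\begin{itemize}
\item Charging $\varepsilon_{\mathrm{nf}}$ per hop after conditioning on everything else requires each honest mask $f_i$ to be independent of the rest of the view.
\item If, as you assume, the adversary can perform mask cancellation, its view contains a function of $F_{\mathrm{comb}}=\sum_{i\in S}\lambda_i f_i$.
\item Then in every hybrid where some honest $e_i$ has been replaced (by a fixed $e'$, or by one coordinate of a simulated sharing), the cancelled reconstruction is $ct_{\mathrm{agg}}$ raised to a wrong exponent. That value is in general $\not\equiv 1 \pmod n$, which the adversary can test.
\item A consistent simulated sharing repairs only the last hybrid. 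Moreover, an integer-level sharing consistent with $m$ is not available to the reduction anyway, because $\sum_{i\in S}\lambda_i e_i$ is a key-dependent multiple of the secret exponent.
\end{itemize}

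The fix is a perfect rewriting hop before Step~1. While $ct_{\mathrm{agg}}$ validly encrypts the common aggregate $m$, one has $ct_{\mathrm{agg}}^{\sum_{i\in S}\lambda_i e_i}=(1+n)^{c\,m}$ for a public constant $c$. So the cancellation term can be computed from the honest $\widetilde d_i$, the corrupted validators' data and $m$, instead of from the masks. Each honest $f_i$ then enters the view only through $\widetilde d_i$, and your one-at-a-time hops with a fixed admissible $e'$ become legitimate. The total is still at most $n_v\varepsilon_{\mathrm{nf}}$, provided the adversary sees no mask data beyond $ct_{\mathrm{agg}}^{F_{\mathrm{comb}}}$.

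Two smaller points remain.
\begin{itemize}
\item The reduction must sample the corrupted validators' key shares itself, since they are not public data. For integer sharing these are statistically close to real.
\item Your factor-$2$ remark is reversed. With the distinguishing-gap definition, a random-bit reduction yields $2\,\mathrm{Adv}^{\mathrm{nDCR}}$; the factor disappears only under the definition $|\Pr[b'=b]-\tfrac12|$.
\end{itemize}
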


\begin{proof}
We consider a sequence of hybrid games.

\textbf{Hybrid $H_0$.}
This is the real $\Phi$-PHE-BC execution. The adversary submits
challenge readings $(m_0,m_1)$, and the challenger encrypts $m_b$
for a uniformly chosen bit $b\in\{0,1\}$.

\textbf{Hybrid $H_1$.}
We replace the threshold-decryption outputs associated with the
challenge execution by their noise-flooded counterparts.
By Theorem~\ref{thm:noise_flooding}, the statistical distance
between the corresponding transcript distributions is bounded by
$\varepsilon_{\mathrm{nf}}$ for each admissible threshold share.
Applying the union bound over at most $n_v$ published validator
shares gives
\begin{equation}
\left|
\Pr[H_0]-\Pr[H_1]
\right|
\leq
n_v\varepsilon_{\mathrm{nf}}.
\label{eq:hybrid_nf}
\end{equation}

If the deployment chooses $\varepsilon_{\mathrm{nf}}$ negligible
in the security parameter $\lambda$ and $n_v$ is polynomially
bounded in $\lambda$, then
$n_v\varepsilon_{\mathrm{nf}}$ is also negligible.

\textbf{Hybrid $H_2$.}
We replace the Paillier challenge ciphertext with an encryption of
zero. By Theorem~\ref{thm:paillier_ind_cpa}, any probabilistic
polynomial-time distinguisher between $H_1$ and $H_2$ can be used
to construct an adversary against the Decisional Composite
Residuosity assumption. Hence,
\begin{equation}
\left|
\Pr[H_1]-\Pr[H_2]
\right|
\leq
\operatorname{Adv}^{\mathrm{DCR}}_{n}(\mathcal{A}').
\label{eq:hybrid_dcr}
\end{equation}

In $H_2$, the challenge ciphertext is independent of the challenge
bit $b$ except with negligible probability arising from the
remaining protocol abstractions. Therefore,
\begin{equation}
\operatorname{Adv}^{\mathrm{IND\text{-}CPA}}_{\mathcal{F}}
(\mathcal{A})
\leq
\operatorname{Adv}^{\mathrm{DCR}}_{n}(\mathcal{A}')
+
n_v\varepsilon_{\mathrm{nf}}
+
\operatorname{negl}(\lambda).
\end{equation}

Thus, under the DCR assumption and the configured statistical-hiding
condition for the flooding layer, the aggregation path satisfies the
stated classical confidentiality bound.
\end{proof}

\noindent

\begin{Theorem}[Transaction Integrity]
\label{thm:transaction_integrity}
Assume that the digital signature scheme configured for the deployment is
EUF-CMA secure. Then no probabilistic polynomial-time classical adversary
can forge a valid transaction
\begin{equation}
T_i=
(ID_i,\mathit{ct}_i,w_i,ts,\mathit{nonce},\sigma_i)
\end{equation}
for an uncompromised device $d_i$, except with negligible probability in
$\lambda$.
\end{Theorem}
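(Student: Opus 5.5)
The proof is a direct reduction to the EUF-CMA security of the configured signature scheme, composed with the registration and verification logic of Phases 1, 3 and 4 and Algorithm~\ref{alg:verifytransaction}.

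We first fix what a successful forgery is. The adversary $\mathcal{A}(\lambda,f,q_e,q_h)$ may corrupt up to $f$ validators, has Dolev--Yao control of all channels, and may observe or trigger polynomially many honest transactions of the uncompromised device $d_i$. It wins if some validator running Algorithm~\ref{alg:verifytransaction} returns \texttt{ADMITTED} for a transaction
\[
T^*=(ID_i,ct^*,w^*,ts^*,\mathsf{nonce}^*,\sigma^*)
\]
whose signed tuple $(ID_i,ct^*,\mathsf{nonce}^*)$ was never signed by $d_i$. The theorem must be read at this level: $\sigma_i$ covers only $H(ID_i\Vert ct_i\Vert\mathsf{nonce})$. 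The proof therefore has to state that $w_i$ and $ts$ are bound either by extending the signed message to the full tuple, as Section~\ref{subsec:primitives} asserts, or by the leader's signature and the Merkle root in $H_k$. I would adopt the former, so that the signed message is $M=H(ID_i\Vert ct_i\Vert w_i\Vert ts\Vert\mathsf{nonce})$.

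Next come two preliminary game hops.
\begin{enumerate}
\item Certificate forgery. If the adversary gets $\operatorname{VerifyRA}$ to accept a certificate binding $ID_i$ to a key $pk'\neq pk_i$, this breaks EUF-CMA of the RA signature. The loss is $\operatorname{Adv}^{\mathrm{EUF\text{-}CMA}}_{\mathrm{RA}}$.
\item Hash collision. Let $\mathsf{Coll}$ be the event that $H(M^*)=H(M_j)$ for some $M^*\neq M_j$, where $M_j$ is a message $d_i$ actually signed. Treating $H$ as a random oracle with $q_h$ queries, $\Pr[\mathsf{Coll}]\le q_h^2/2^{\lambda}$. A standard-model collision-resistance term works equally well.
\end{enumerate}

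The main reduction builds a forger $\mathcal{B}$ against $pk_i$. $\mathcal{B}$ receives a challenge key and embeds it as $pk_i$; to handle an unknown target device it guesses $i$ among the $n_d$ devices, losing a factor $n_d$. It generates every other key itself, including the RA key, the Paillier keys and the validator shares, and simulates the whole $\Phi$-PHE-BC execution, including corrupted validators and the network. Honest signatures of $d_i$ are answered through the signing oracle. When $\mathcal{A}$ produces an admitted $T^*$ that meets the win condition and no $\mathsf{Coll}$ occurs, $H(M^*)$ was never queried to the oracle, so $(H(M^*),\sigma^*)$ is a valid EUF-CMA forgery. Collecting the terms gives
\[
\Pr[\text{forge}]\le n_d\operatorname{Adv}^{\mathrm{EUF\text{-}CMA}}+\operatorname{Adv}^{\mathrm{EUF\text{-}CMA}}_{\mathrm{RA}}+q_h^2/2^{\lambda},
\]
which is negligible.

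Replays of genuine transactions are not forgeries, but the nonce-set check and the $\tau_{\mathrm{ttl}}$ check exclude them anyway. The one caveat is that the nonce set $\mathcal{N}$ must be consistent across honest validators, which follows from BFT agreement.

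I expect the main obstacle to be defining a sound message binding rather than any calculation. Two points need care. Byzantine validators may alter the unsigned fields, so those fields must be covered by the signature or the statement weakened to the signed fields. And $\mathcal{D}_{\mathrm{I}}$ devices do not sign at all, so the theorem must be restricted to $\mathcal{D}\ge\mathcal{D}_{\mathrm{II}}$, consistent with Proposition~\ref{prop:layerselect}.
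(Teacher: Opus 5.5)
Your proposal is correct and follows the same route as the paper. In both, an admitted transaction for an uncompromised device whose signed content $d_i$ never signed yields an EUF-CMA forgery, certificate verification binds $pk_i$ to $ID_i$, and the nonce and $\tau_{\mathrm{ttl}}$ checks exclude replays.

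Your version is more careful than the paper's sketch in three ways:
\begin{itemize}
\item You account for the RA-forgery, hash-collision and target-guessing terms, which the paper omits.
\item You notice that the Phase~3 signature covers only $H(ID_i\Vert ct_i\Vert\mathsf{nonce})$, so $w_i$ and $ts$ are unprotected; the paper's proof tacitly ignores this. Signing the full tuple, as Section~\ref{subsec:primitives} describes, is what makes the statement hold as written.
\item You restrict to $\mathcal{D}\geq\mathcal{D}_{\mathrm{II}}$, since $\mathcal{D}_{\mathrm{I}}$ devices do not sign; this is likewise needed for the statement to hold.
\end{itemize}
One correction: the alternative you mention, binding $w_i$ and $ts$ through the leader's signature and Merkle root, would not suffice. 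A Dolev--Yao adversary or a Byzantine leader can alter $w_i$ before the block is formed, so signing the full tuple is the right choice.
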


\begin{proof}
The signature
\begin{equation}
\sigma_i
=
\operatorname{Sign}_{sk_i}
\left(
H(ID_i\parallel \mathit{ct}_i\parallel \mathit{nonce})
\right)
\end{equation}
binds the device identity, ciphertext, and freshness value to the
transaction. A successful forgery for an uncompromised signing key therefore
implies a successful forgery against the EUF-CMA security of the configured
digital signature scheme.

In addition, replay is prevented by checking that
\begin{equation}
\mathit{nonce}\notin\mathrm{Ledger},
\end{equation}
timestamp validation enforces the freshness condition
\begin{equation}
|ts-t_{\mathrm{now}}|
\leq
\tau_{\mathrm{ttl}},
\end{equation}
and certificate verification binds the public key to the registered device
identity. Therefore, an adversary cannot create or replay an accepted
transaction without either forging a valid signature or violating one of the
corresponding verification conditions.
\end{proof}

\noindent

\section{Network Resilience and Performance Analysis}
\label{sec:network_resilience}

We derive topology-dependent conditions and bounds for protocol liveness,
block latency, throughput, and communication cost. All results are explicit
functions of graph-theoretic properties of the validator subgraph $G_v$
introduced in Section~\ref{subsec:graph}. In particular, the
analysis establishes a graph-connectivity condition for Byzantine liveness
and quantifies the effect of validator-network diameter on latency and
throughput.

The analysis distinguishes the topology of the IoT device layer from that of
the validator layer. A device deployment may follow a tree, star, mesh, or
scale-free topology, while the validator subgraph can be independently
augmented to satisfy the connectivity required for Byzantine fault tolerance.
Thus, low-connectivity device-side topologies do not by themselves determine
the liveness of the validator consensus protocol.

\begin{Definition}[Protocol Liveness]
\label{def:protocol_liveness}
A $\Phi$-PHE-BC instance achieves \emph{liveness} if, for every admissible
execution containing at most $f$ Byzantine validators, every transaction
submitted by an honest device is eventually included in a committed block
after a finite number of BFT rounds, under the assumed partial-synchrony
model.
\end{Definition}

\setcounter{Theorem}{8}
\begin{Theorem}[Topology--Liveness Equivalence]
\label{thm:topology_liveness}
A $\Phi$-PHE-BC instance achieves liveness under adversary
$\mathcal{A}(\lambda,f,q_e,q_h)$ if and only if
\begin{equation}
\kappa(G_v)\geq f+1,
\label{eq:liveness_connectivity}
\end{equation}
where $\kappa(G_v)$ denotes the vertex connectivity of the validator
subgraph. The result holds for the considered deployment topology classes
under the authenticated-path routing model, provided that the validator
subgraph itself satisfies Eq.~\eqref{eq:liveness_connectivity}.
\end{Theorem}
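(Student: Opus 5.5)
The plan is to prove the two directions separately. Sufficiency uses Menger's theorem together with the standard PBFT liveness argument under partial synchrony. Necessity uses an explicit adversarial construction that exploits a small vertex cut.

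\textbf{Sufficiency.} Assume $\kappa(G_v)\geq f+1$ and fix any admissible execution with Byzantine set $\mathcal{B}\subseteq\mathcal{V}_v$, $|\mathcal{B}|\le f$. Since removing at most $f<\kappa(G_v)$ vertices cannot disconnect $G_v$, the honest subgraph $G_v-\mathcal{B}$ is connected. Equivalently, by Menger's theorem, every pair of honest validators is joined by $f+1$ internally vertex-disjoint paths, at least one of which avoids $\mathcal{B}$. By the authenticated-path routing assumption of the Remark in Section~\ref{subsec:graph}, honest validators forward correctly received messages, and Dolev--Yao injection cannot forge authenticated messages. Hence after GST every message between honest validators is delivered within at most $\operatorname{diam}(G_v-\mathcal{B})\cdot\max\tau_{uv}+\Delta$. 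This yields an effective partially synchronous, reliable point-to-point network among the $n_v-|\mathcal{B}|\ge 2f+1$ honest validators. We then invoke the standard PBFT liveness argument to conclude that every honest device's transaction is eventually committed:
\begin{itemize}
\item view changes with exponentially growing timeouts eventually reach a view with an honest leader after GST;
\item that leader collects the $2f+1$ prepare and commit quorums.
\end{itemize}
Here the transaction must first reach some honest validator. We assume this, or that the device connects to at least $f+1$ validators, and we state this explicitly as a modeling assumption.

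\textbf{Necessity.} Suppose $\kappa(G_v)\le f$. Then there is a vertex cut $S$ with $|S|\le f$ whose removal splits $G_v$ into nonempty parts $A$ and $C$. The adversary corrupts exactly $S$, and these validators silently drop all traffic. Because all communication is by path routing in $G_v$, no honest message crosses between $A$ and $C$. Without loss of generality $|A|\le|C|$. Since $|A|+|C|\ge n_v-f\ge 2f+1$, we have $|A|\le n_v-f-1$, and in fact $|A|+|S|$ and $|C|+|S|$ cannot both reach the $2f+1$ quorum size needed for commitment without the adversary equivocating across sides.

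To get a liveness violation cleanly, I would use an indistinguishability argument. Let an honest device submit its transaction only to validators in the smaller side $A$. That side can never assemble a $2f+1$ quorum of honest votes: its honest members number $|A|\le\lfloor (n_v-f)/2\rfloor<2f+1$ whenever $n_v=3f+1$, using $f=\lfloor (n_v-1)/3\rfloor$. The Byzantine set $S$ simply abstains. Hence the transaction is never committed, and liveness fails. The degenerate case where the cut leaves a side with a single vertex, as in star or tree topologies with $\kappa=1$, is subsumed by this argument.

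\textbf{Main obstacle.} The main difficulty is this necessity direction in general. If $n_v>3f+1$, or if the cut is unbalanced, the larger side $C\cup S$ may still form a quorum, so liveness for transactions routed to $C$ could survive. I would handle this in two steps. First, I would quantify over all admissible executions, including device-to-validator routing chosen by the adversary, so that routing the transaction to the side lacking a quorum suffices. Second, I would observe that an honest leader in $A$ can never make progress, so the protocol is no longer guaranteed to terminate. If the equivalence cannot be obtained in full generality, I would state the necessity direction for $n_v=3f+1$ and for adversarially chosen submission points. That suffices for the "if and only if" within the stated authenticated-path routing model.
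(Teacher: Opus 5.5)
Your proposal follows the paper's proof in both directions. Sufficiency holds because removing at most $f<\kappa(G_v)$ Byzantine validators leaves the honest validator subgraph connected, so authenticated-path routing and the standard PBFT quorum argument with $n_v-f\geq 2f+1$ honest validators go through; necessity follows by corrupting a vertex cut $S$ with $|S|\leq f$ so that no honest path joins the two sides. Your quorum count on the smaller side, together with an adversarially chosen submission point, supplies a step the paper only asserts (that the partition blocks progress), which matters when one side may still hold a quorum; for $n_v=3f+1$ you can even drop the submission assumption by corrupting all of $S$ plus $f-|S|$ further validators while leaving an honest validator on each side, so the $2f+1$ honest validators are split and neither side reaches a $2f+1$ quorum.
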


\begin{proof}
We prove necessity and sufficiency.

\paragraph{Necessity.}
Suppose that
\begin{equation}
\kappa(G_v)\leq f.
\end{equation}
By the definition of vertex connectivity, there exists a vertex cut
$S\subseteq V_v$ such that
\begin{equation}
|S|\leq f
\end{equation}
and removing $S$ disconnects $G_v$ into at least two nonempty components,
denoted by $C_1$ and $C_2$.

Because the adversary controls at most $f$ validators, it can corrupt every
validator in $S$. The corrupted validators can suppress, delay, or equivocate
on protocol messages that would otherwise pass through the cut. Since $S$ is
a vertex cut, after its removal there is no honest communication path between
$C_1$ and $C_2$.

Consequently, validators in the disconnected components cannot reliably
exchange the messages required to establish the necessary BFT quorum across
the partition. Therefore, there exists an admissible execution with at most
$f$ Byzantine validators in which progress can be prevented indefinitely.
Hence liveness requires
\begin{equation}
\kappa(G_v)\geq f+1.
\end{equation}

\paragraph{Sufficiency.}
Suppose that
\begin{equation}
\kappa(G_v)\geq f+1.
\end{equation}
Then removal of any set of at most $f$ Byzantine validators leaves the
remaining honest validator subgraph connected. Since the protocol assumes
\begin{equation}
n_v\geq 3f+1,
\end{equation}
at least
\begin{equation}
n_v-f\geq 2f+1
\end{equation}
honest validators remain.

Under authenticated-path routing, every pair of honest validators can
communicate through paths consisting entirely of honest validators. Under
the assumed partial-synchrony conditions, the standard BFT quorum argument
therefore permits an honest quorum of at least $2f+1$ validators to
participate in progress and commit a block. Hence every transaction submitted
by an honest device is eventually included in a committed block.
\end{proof}

\begin{Remark}
\label{rem:validator_device_separation}
The condition $\kappa(G_v)\geq f+1$ is determined exclusively by the
validator subgraph. A tree or star deployment of IoT devices may therefore
coexist with a separately connected validator overlay satisfying the liveness
condition. Conversely, a validator subgraph that itself has
$\kappa(G_v)=1$ cannot tolerate the removal of even one critical validator
when $f\geq 1$.
\end{Remark}

\subsection{Throughput Bounds}
\label{subsec:throughput_bounds}

\begin{Definition}[Protocol Latency]
\label{def:protocol_latency}
The end-to-end latency for processing and committing one block is modeled as
\begin{equation}
L(G)
=
\operatorname{diam}(G_v)\,\overline{\tau}
+
\tau_{\mathrm{agg}}
+
\tau_{\mathrm{thresh}},
\label{eq:protocol_latency}
\end{equation}
where $\overline{\tau}$ is the mean per-hop communication latency,
$\tau_{\mathrm{agg}}$ is the homomorphic aggregation time, and
$\tau_{\mathrm{thresh}}$ is the threshold-decryption latency.
\end{Definition}

For the Paillier-based aggregation layer, the aggregation time depends on the
number of ciphertext operations and the Paillier modulus size. Accordingly,
we write
\begin{equation}
\tau_{\mathrm{agg}}
=
O(n_d\log n),
\end{equation}
where $\log n$ denotes the bit-length of the Paillier modulus. The threshold
term $\tau_{\mathrm{thresh}}$ captures the cost of generating and combining
the required $\tau_{\mathrm{thresh}}$ threshold-Paillier shares and is kept
explicit because its implementation cost depends on the selected threshold
parameters and arithmetic implementation.

\begin{Theorem}[Topology-Parameterized Throughput]
\label{thm:topology_throughput}
For a $\Phi$-PHE-BC instance with block size $B$, the throughput is
\begin{equation}
\operatorname{TPS}(G)
=
\frac{B}{L(G)}.
\label{eq:tps_general}
\end{equation}
Substituting Eq.~\eqref{eq:protocol_latency} yields
\begin{equation}
\operatorname{TPS}(G)
=
\frac{B}
{\operatorname{diam}(G_v)\overline{\tau}
+\tau_{\mathrm{agg}}
+\tau_{\mathrm{thresh}}}.
\label{eq:tps_expanded}
\end{equation}

For the considered topology classes, the corresponding asymptotic bounds are
\begin{align}
\operatorname{TPS}(T_{\mathrm{tree}})
&=
\frac{B}
{O(\log n_v)\overline{\tau}
+\tau_{\mathrm{agg}}
+\tau_{\mathrm{thresh}}},
\label{eq:tps_tree}
\\
\operatorname{TPS}(T_{\mathrm{star}})
&=
\frac{B}
{2\overline{\tau}
+\tau_{\mathrm{agg}}
+\tau_{\mathrm{thresh}}},
\label{eq:tps_star}
\\
\operatorname{TPS}(T_{\mathrm{mesh}})
&=
\frac{B}
{O(\sqrt{n_v})\overline{\tau}
+\tau_{\mathrm{agg}}
+\tau_{\mathrm{thresh}}},
\label{eq:tps_mesh}
\\
\operatorname{TPS}(T_{\mathrm{sf}})
&=
\frac{B}
{O(\log n_v)\overline{\tau}
+\tau_{\mathrm{agg}}
+\tau_{\mathrm{thresh}}}.
\label{eq:tps_sf}
\end{align}
\end{Theorem}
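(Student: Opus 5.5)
The plan is to treat the theorem as a consequence of the latency model in Definition~\ref{def:protocol_latency}, together with a pipelining-free accounting of committed transactions. First, I will fix the convention that one block of $B$ transactions is processed and committed per consensus instance, and that instances are executed sequentially. Under that convention, in a time window of length $t$ the number of committed blocks is $\lfloor t/L(G)\rfloor$, so the number of committed transactions is $B\lfloor t/L(G)\rfloor$. Dividing by $t$ and letting $t\to\infty$ gives Eq.~\eqref{eq:tps_general}. Substituting Eq.~\eqref{eq:protocol_latency} directly then yields Eq.~\eqref{eq:tps_expanded}.

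The liveness hypothesis $\kappa(G_v)\geq f+1$ of Theorem~\ref{thm:topology_liveness} is needed so that every instance actually terminates after GST. Without it, $L(G)$ would not be finite and the rate would be zero. I will therefore state the throughput expression for executions after GST in which the liveness condition holds. I will also note that $\operatorname{diam}(G_v)\overline{\tau}$ is the worst-case propagation time of one broadcast wave under authenticated-path routing. The constant number of PBFT phases (pre-prepare, prepare, commit) is absorbed into $\overline{\tau}$ or into the $O(\cdot)$ factors.

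Second, I will specialize $\operatorname{diam}(G_v)$ for each topology class, using the diameters listed in Section~\ref{subsec:graph}.
\begin{itemize}
\item \textbf{Star:} the diameter is exactly $2$, which gives Eq.~\eqref{eq:tps_star}.
\item \textbf{Mesh:} a $\sqrt{n_v}\times\sqrt{n_v}$ grid or torus has diameter at most $2\sqrt{n_v}$, i.e.\ $O(\sqrt{n_v})$, which gives Eq.~\eqref{eq:tps_mesh}.
\item \textbf{Tree:} a balanced tree of bounded branching factor on $n_v$ validators has height, hence diameter, $O(\log n_v)$. The model states the tree bound in terms of $n_d$, so I will restate it for the validator overlay.
\item \textbf{Scale-free:} I will cite the known Barab\'asi--Albert diameter bound $O(\log n_v)$, which in fact sharpens to $O(\log n_v/\log\log n_v)$ for $m\ge 2$ and holds with high probability.
\end{itemize}
Since $x\mapsto B/(x\overline{\tau}+\tau_{\mathrm{agg}}+\tau_{\mathrm{thresh}})$ is monotone decreasing in $x$, substituting each diameter bound gives the corresponding throughput expression, to be read as a lower bound on TPS.

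The main obstacle is not the algebra but the modeling assumptions, and the proof must make them explicit. First, the equalities are really bounds, with $O(\cdot)$ appearing in denominators. Second, the scale-free bound holds only with high probability over graph generation. Third, star and tree validator overlays violate $\kappa(G_v)\ge f+1$ when $f\ge1$ (see Remark~\ref{rem:validator_device_separation}). The tree and star formulas must therefore be read for augmented overlays whose diameter is still bounded by the stated quantities, since adding edges cannot increase diameter. I will state these caveats and keep $\tau_{\mathrm{agg}}=O(n_d\log n)$ and $\tau_{\mathrm{thresh}}$ symbolic, as the paper does.
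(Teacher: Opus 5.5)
Your proposal is correct and follows the same route as the paper's proof. Take $\operatorname{TPS}(G)=B/L(G)$ as committed transactions per block-processing time, substitute the latency model of Definition~\ref{def:protocol_latency}, and plug in the per-class diameters: star $2$, two-dimensional mesh $O(\sqrt{n_v})$, and balanced tree and Barab\'asi--Albert $O(\log n_v)$. Your extra steps do not change the argument but make explicit what the paper leaves implicit: deriving $B/L(G)$ as a long-run rate, reading the $O(\cdot)$ forms as lower bounds on throughput via monotonicity, flagging that the scale-free diameter bound holds only with high probability, and noting that augmenting a star or tree overlay to meet $\kappa(G_v)\geq f+1$ cannot increase its diameter.
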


\begin{proof}
Equation~\eqref{eq:tps_general} follows from the definition of throughput as
the number of transactions committed per block divided by the end-to-end
block-processing time. Substituting the topology-dependent diameter values
into Eq.~\eqref{eq:protocol_latency} gives the stated bounds.

For a balanced tree,
\begin{equation}
\operatorname{diam}(T_{\mathrm{tree}})
=
O(\log n_v).
\end{equation}
For a star,
\begin{equation}
\operatorname{diam}(T_{\mathrm{star}})
=
2.
\end{equation}
For a two-dimensional mesh,
\begin{equation}
\operatorname{diam}(T_{\mathrm{mesh}})
=
O(\sqrt{n_v}).
\end{equation}
For the considered scale-free deployment model, the characteristic path
length and effective diameter scale logarithmically under the adopted
asymptotic model, giving
\begin{equation}
\operatorname{diam}(T_{\mathrm{sf}})
=
O(\log n_v).
\end{equation}
Substitution into Eq.~\eqref{eq:tps_expanded} proves the result.
\end{proof}

\noindent
A star-shaped \emph{validator} topology has the smallest diameter among the
listed structures and therefore the smallest communication component in the
latency model. However, for Byzantine fault tolerance it must still satisfy
Theorem~\ref{thm:topology_liveness}; consequently, a device-layer star should
not be interpreted as requiring a single hub validator.

\subsection{Per-Block Communication Cost}
\label{subsec:communication_cost}

\setcounter{Proposition}{10}
\begin{Proposition}[Per-Block Communication Cost]
\label{prop:communication_cost}
The total communication volume per committed block can be expressed as
\begin{equation}
C(G,n_d,n_v,B)
=
n_d|T|
+
n_v^2|\mathrm{Vote}|
+
\tau_{\mathrm{thresh}}|\widehat{d}|
+
|B_k|,
\label{eq:communication_cost}
\end{equation}
where $|T|$, $|\mathrm{Vote}|$, $|\widehat{d}|$, and $|B_k|$ denote,
respectively, the byte sizes of a transaction, a BFT vote, a noise-flooded
partial-decryption message, and the committed block metadata.
\end{Proposition}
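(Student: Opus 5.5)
The plan is a direct accounting argument: decompose one committed block's lifecycle into the message-bearing protocol phases of Section~\ref{subsec:protocol}, bound the bytes each phase puts on the wire, and sum. Phases 1, 8 and 9 are either one-time setup (registration) or off the critical per-block path (query retrieval, optional proofs carried inside $T_i$). They are therefore either excluded or absorbed into $|T|$ and $|B_k|$, and the proof will state this convention explicitly. The remaining contributions then correspond exactly to the four terms of Eq.~\eqref{eq:communication_cost}.

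\textbf{Transaction term.} In Phases~2--4 each of the $n_d$ devices submits exactly one transaction $T_i=(ID_i,ct_i,w_i,ts,\mathsf{nonce},\sigma_i)$ to its target validator. Any session-key or ZKP material is counted inside $|T|$ as a fixed per-transaction overhead. This gives $n_d|T|$.

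\textbf{Vote term.} For Phase~6 I would invoke the standard PBFT message pattern. In each of the prepare and commit phases, every validator sends a vote to every other validator, which is $O(n_v^2)$ messages. The constant factor (two all-to-all rounds, $n_v(n_v-1)$ ordered pairs) is absorbed into $|\mathrm{Vote}|$, read as the effective per-pair vote cost, giving $n_v^2|\mathrm{Vote}|$.

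\textbf{Share term.} In Phase~7, reconstruction needs only $\tau_{\mathrm{thresh}}$ flooded shares $\widehat d_i=ct_{\mathrm{agg}}^{\widetilde e_i}\bmod n^2$. Each share is an element of $\mathbb{Z}_{n^2}$, so $|\widehat d|=O(\log n)$ bits regardless of the flooding mask, since the mask lives in the exponent (Theorem~\ref{thm:noise_flooding}). This gives $\tau_{\mathrm{thresh}}|\widehat d|$.

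\textbf{Block term.} In Phase~5 the leader proposes $B_k$. Its transactions are already counted in the first term, so only the header $H_k$ (previous hash, Merkle root, $ts$, $ct_{\mathrm{agg}}$, $\sigma_{\mathrm{leader}}$) adds new bytes, giving $|B_k|$.

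The main obstacle is interpreting the statement honestly, because it is a cost \emph{model} rather than a sharp theorem. Two choices have to be fixed and justified. First, the count must be in distinct payload bytes, not per-hop transmissions. Under authenticated-path routing in $G_v$, a message may be forwarded over up to $\operatorname{diam}(G_v)$ hops, and the leader's proposal is disseminated to $n_v$ validators. Counted per-link, these would give multiplicative factors in $\operatorname{diam}(G_v)$ and $n_v$. I would therefore state that Eq.~\eqref{eq:communication_cost} measures origin-level volume, with hop amplification and block dissemination to the $n_v$ validators either folded into the constants or handled by the latency model of Definition~\ref{def:protocol_latency}. This also explains why $G$ appears in $C(G,\cdot)$ only through the admissibility condition of Theorem~\ref{thm:topology_liveness}. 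Second, the proof must record that view changes are excluded and that the count is taken after GST in a stable view. Otherwise, extra quorum rounds would add further $n_v^2$ terms. With these conventions made explicit, summing the four terms gives the stated equality.
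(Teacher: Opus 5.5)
Your proposal is correct and follows the paper's own proof: both do a term-by-term accounting of the $n_d$ transaction submissions, the all-to-all PBFT voting, the $\tau_{\mathrm{thresh}}$ flooded partial-decryption shares, and the block metadata, then sum to obtain Eq.~\eqref{eq:communication_cost}. The conventions you add are left implicit in the paper's proof, and stating them is a clarification rather than a different route: counting origin-level bytes rather than per-hop transmissions, absorbing the $2n_v(n_v-1)$ constant into $|\mathrm{Vote}|$, counting the proposal once, and working in a stable post-GST view.
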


\begin{proof}
The first term accounts for the submission of $n_d$ device transactions.
The second term captures the dominant all-to-all BFT voting component, which
requires $O(n_v^2)$ vote transmissions under the adopted broadcast model.
The third term accounts for the
$\tau_{\mathrm{thresh}}$ threshold-decryption shares required for
reconstruction, and the final term represents block metadata and associated
commitment information. Summing these contributions yields
Eq.~\eqref{eq:communication_cost}.
\end{proof}

\begin{Remark}
The dominant communication term is generally
\begin{equation}
n_v^2|\mathrm{Vote}|,
\end{equation}
which reflects the quadratic communication cost of the PBFT-style voting
procedure. A hierarchical or Merkle-based aggregation mechanism can reduce
the communication burden associated with collecting device submissions, but
it does not remove the quadratic validator-voting component of the adopted
BFT protocol.
\end{Remark}

\noindent
The results of this section establish topology-dependent guarantees for
validator connectivity, liveness, communication delay, throughput, and
per-block communication cost. The analysis does not claim, in its present
form, a separate closed-form resilience metric based on the Fiedler value
$\lambda_2(G_v)$ or the maximum degree $\Delta(G_v)$. Such a metric should be
introduced only if it is formally defined and proved in a subsequent
theorem. The experimental consequences of the topology-dependent bounds are
evaluated in Section~\ref{sec:eval}.
\section{Game-Theoretic Stability}
\label{sec:game}

We model validator participation as a strategic-form game and derive the
conditions under which honest participation is the dominant strategy. The
analysis is deliberately parameterized by a generic cryptographic-cost
multiplier so that the incentive result applies to the classical cryptographic
stack evaluated in Section~\ref{sec:eval} as well as to heavier cryptographic
stacks that may be considered in future work.
\begin{Definition}[Validator Utility]
\label{def:utility}
The utility of validator $V_i$ is
\begin{equation}
U_i(s_i)
=
\begin{cases}
R_i-\alpha C_i,
& \text{if } s_i=\mathsf{honest},\\
G_i-\alpha C_i-P_i,
& \text{if } s_i=\mathsf{deviate},
\end{cases}
\label{eq:utility}
\end{equation}
where $R_i>0$ is the block reward for honest participation, $C_i>0$ is the
computational cost of executing one block's cryptographic operations,
$\alpha\geq 1$ is a generic cryptographic-cost multiplier relative to a
lightweight baseline, $G_i\geq 0$ is the expected gain from a successful
deviation, and $P_i>0$ is the penalty imposed upon detection.
\end{Definition}

\setcounter{Theorem}{11}
\begin{Theorem}[Nash Equilibrium Stability]
\label{thm:nash}
Honest participation is the dominant strategy for every validator if and only
if
\begin{equation}
P_i>G_i-R_i,
\qquad \forall i\in\mathcal{V}_v.
\label{eq:nash_cond}
\end{equation}
Moreover, the cryptographic-cost multiplier $\alpha$ does not appear in
Eq.~\eqref{eq:nash_cond}; therefore, within the utility model of
Eq.~\eqref{eq:utility}, changing the common per-block cryptographic overhead
does not by itself require redesign of the reward-and-penalty condition.
\end{Theorem}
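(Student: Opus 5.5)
The plan is to argue directly from the utility model of Definition~\ref{def:utility}. Since $U_i$ depends only on validator $V_i$'s own action $s_i\in\{\mathsf{honest},\mathsf{deviate}\}$, I will first record that, in this model, the payoff of $V_i$ does not depend on the strategy profile $s_{-i}$ of the other validators. Consequently, ``honest is dominant'' for $V_i$ reduces to a single pairwise comparison of its two payoffs. If the intended notion is strict dominance, this comparison is $U_i(\mathsf{honest})>U_i(\mathsf{deviate})$, and it holds uniformly over all $s_{-i}$.

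I will then compute the payoff difference
\begin{equation*}
U_i(\mathsf{honest})-U_i(\mathsf{deviate})
=(R_i-\alpha C_i)-(G_i-\alpha C_i-P_i)
=R_i-G_i+P_i .
\end{equation*}
The term $\alpha C_i$ cancels because both branches of Eq.~\eqref{eq:utility} incur the same cryptographic cost. This difference is positive exactly when $P_i>G_i-R_i$, which gives both directions of the equivalence with Eq.~\eqref{eq:nash_cond} for each $i$ separately.

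Quantifying over all $i\in\mathcal{V}_v$ gives the ``every validator'' statement. For the \emph{if} direction, when the condition holds for all $i$, each validator has a strictly dominant strategy. The profile in which every validator plays $\mathsf{honest}$ is then a dominant-strategy equilibrium, and hence the all-honest Nash equilibrium; it is also unique, since a strictly dominated action is never a best response. For the \emph{only if} direction, if $P_j\le G_j-R_j$ for some $j$, then $\mathsf{deviate}$ weakly dominates $\mathsf{honest}$ for $V_j$, so honesty is not strictly dominant for that validator.

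The invariance claim follows from the cancellation. Because $\alpha$ appears with the same coefficient $-C_i$ in both branches, it drops out of the difference. Therefore any change of $\alpha\ge 1$, or of the common cost $C_i$, leaves the dominance condition unchanged.

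The only real obstacle is definitional: the equivalence is exact only under strict dominance. Under weak dominance the boundary case $P_i=G_i-R_i$ would make honesty weakly dominant while violating the strict inequality. I would resolve this by stating explicitly that ``dominant'' means strictly dominant. I would also note that the model takes detection as certain, since $P_i$ is charged deterministically; a detection probability $p<1$ would replace $P_i$ by $pP_i$ without affecting the cancellation of $\alpha$.
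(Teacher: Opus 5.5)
Your proposal is correct and follows essentially the same route as the paper. Like the paper, you compute $U_i(\mathsf{honest})-U_i(\mathsf{deviate})=R_i-G_i+P_i$, note that $\alpha C_i$ cancels, and read off the condition under the strict-dominance reading $U_i(\mathsf{honest})>U_i(\mathsf{deviate})$, which the paper also adopts; your additional remarks (payoffs independent of $s_{-i}$, uniqueness of the all-honest equilibrium, and the boundary case $P_i=G_i-R_i$) are sound elaborations that the paper leaves unstated.
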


\begin{proof}
Honest participation is dominant if and only if
$U_i(\mathsf{honest})>U_i(\mathsf{deviate})$. Using
Eq.~\eqref{eq:utility},
\begin{align}
R_i-\alpha C_i
&>G_i-\alpha C_i-P_i,\\
R_i
&>G_i-P_i,\\
P_i
&>G_i-R_i.
\end{align}
The terms $\alpha C_i$ cancel identically, proving
Eq.~\eqref{eq:nash_cond} and the claimed independence from the common
cryptographic-cost multiplier.
\end{proof}

The minimum penalty that maintains strict preference for honest participation
is therefore
\begin{equation}
P_i^{*}=G_i-R_i+\varepsilon,
\qquad \varepsilon>0.
\label{eq:opt_penalty}
\end{equation}
Consequently, a protocol operator changing the cryptographic stack needs only
to verify that the deployed penalty still satisfies
Eq.~\eqref{eq:nash_cond}, provided that the additional cryptographic cost is
incurred symmetrically by the honest and deviating strategies as assumed in
Eq.~\eqref{eq:utility}. This qualification is important: the theorem does not
cover deviations that selectively avoid part of the prescribed cryptographic
workload.

The validator utility may also depend on network position. Under a
reward-allocation model in which a validator's effective processing
opportunity increases with its degree, we write
\begin{equation}
R_i\propto \deg_i\,B\,r_{\mathrm{unit}},
\label{eq:degree_reward}
\end{equation}
where $\deg_i$ is the degree of validator $V_i$ in $G_v$, $B$ is the block
size, and $r_{\mathrm{unit}}$ is the per-transaction reward. Under this model,
substitution of Eq.~\eqref{eq:degree_reward} into
Eq.~\eqref{eq:nash_cond} implies that a larger effective reward reduces the
minimum penalty required by Eq.~\eqref{eq:opt_penalty}. Dense validator
connectivity is independently beneficial to liveness because
Theorem~\ref{thm:topology_liveness} requires
$\kappa(G_v)\geq f+1$, as established in
Section~\ref{sec:network_resilience}. The game-theoretic argument should
therefore be interpreted as complementary to, rather than a replacement for,
the graph-theoretic resilience analysis of Section~\ref{sec:network_resilience}.

\section{Implementation and Evaluation}
\label{sec:eval}

The evaluation considers multiple benchmark configurations to assess
performance and security--performance trade-offs of the classical
$\Phi$-PHE-BC implementation. For topology and throughput analysis, the
validator population is varied over $n_v\in\{4,8,16,32,64\}$ with a fixed
device population of 200 and a block size of 100 transactions, and results are
averaged over 20 trials. The emulated network conditions include an
intra-data-center mean link latency of $5\,\mathrm{ms}$ and a wide-area mean
link latency of $50\,\mathrm{ms}$. Threshold-decryption overhead is evaluated
at $n_v=16$ and 200 devices using 1000 trials per configuration, a 2048-bit
Paillier modulus, and a nominal 128-bit classical security level. Byzantine
resilience is evaluated at $n_v=7$, for which $f=2$, with
$n_{\mathrm{byz}}\in\{0,1,2\}$. Thus, the empirical Byzantine experiments
remain inside the configured fault-tolerant regime. Behavior outside that
regime is treated analytically through
Theorem~\ref{thm:topology_liveness}, rather than being inferred from
unperformed partition-attack experiments.

\begin{table*}[t]
\centering
\caption{Unified benchmark configurations used in the evaluation. Distinct validator sets correspond to distinct experimental objectives and should not be interpreted as one common sweep.}
\label{tab:benchmark_config}
\begin{tabular}{p{0.24\textwidth}p{0.20\textwidth}p{0.17\textwidth}p{0.29\textwidth}}
\toprule
\textbf{Experiment} & \textbf{Network size} & \textbf{Trials / latency} & \textbf{Fixed parameters and objective}\\
\midrule
Topology scaling & $n_v\in\{4,8,16,32,64\}$; $n_d=200$ & 20 trials; $\overline{\tau}=5$ or $50$ ms & Block size $B=100$; topology-dependent latency/throughput sensitivity \\
PBFT validator sensitivity & $n_v\in\{4,5,6,7,8,10,12,16\}$ & 20 trials; $\overline{\tau}=5$ ms & Measured TPS series reported in Table~\ref{tab:throughput_validators} \\
Threshold decryption & $n_v=16$; $n_d=200$ & 1000 trials & 2048-bit Paillier modulus; nominal 128-bit classical security level \\
Byzantine-load sensitivity & $n_v=7$, $f=2$; $n_{\mathrm{byz}}\in\{0,1,2\}$ & Recorded fault-tolerant trials & Consensus latency and decryption success within the admissible fault regime \\
\bottomrule
\end{tabular}
\end{table*}

\subsection{Implementation}
The experiments in this section evaluate the primary Paillier threshold-aggregation path with classical ECDSA-256 transaction authentication. The optional BFV/CKKS bridge and optional ZKP configurations described in Section~\ref{sec:framework} are architectural extensions and are not benchmarked as part of the reported baseline unless explicitly stated.

\label{subsec:impl}

The experimental environment is built on Ubuntu 22.04 LTS (x86-64) with
Python 3.11 as the primary orchestration and experiment language.
Containerization and orchestration use Docker (v25.x) and Docker Compose
(v2.x). The blockchain layer uses Hyperledger Fabric v2.5.4, Fabric CA v1.5.9
for identity management, and CouchDB v3.3.x as the world-state database.
Classical cryptographic operations, including Paillier, ECDSA/ECIES, and HKDF,
are implemented using \texttt{pycryptodome} (v3.20.0), together with SHA3-256
and SHAKE-256 from Python's standard \texttt{hashlib} library. Graph analysis
and scientific computation use NetworkX (v3.3), NumPy (v1.26.4), and SciPy
(v1.13.0); visualization uses Matplotlib (v3.8.4), and data handling uses
Pandas (v2.2.2). Network conditions are emulated using \texttt{tc netem}
(iproute2 6.x) with normally distributed per-link delay and $1\,\mathrm{ms}$
jitter, supported by Linux network namespaces. Testing uses \texttt{pytest}
(v8.1.1), and Fabric client interactions use \texttt{grpcio} and
\texttt{grpcio-tools} (v1.63.0).

\subsubsection{Chaincode}
The PHE aggregation chaincode is implemented in Go and exposes nine
transaction functions corresponding to the nine protocol phases described in
Section~\ref{subsec:protocol}. Homomorphic aggregation is performed using
Go's \texttt{math/big} package for modular arithmetic. Threshold partial
decryption and Lagrange combination are also integrated into the chaincode
workflow. The implementation also includes the mask-cancellation step used by the
noise-flooding wrapper, so that the aggregate masking contribution is
removed during threshold reconstruction before final plaintext
recovery. Therefore, the measured threshold-decryption results include
the correctness-preserving flooding/cancellation procedure rather than
an unflooded threshold-Paillier baseline. The information-theoretic protection associated with this flooding step is the property established in
Theorem~\ref{thm:noise_flooding}; it should not be conflated with the
classical computational security of Paillier.

\subsubsection{Transaction Authentication}
Device transactions are authenticated using ECDSA-256 signatures in the
evaluated implementation, consistent with the classical EUF-CMA transaction
integrity result of Theorem~\ref{thm:transaction_integrity}. The baseline is a
classical PHE--blockchain configuration using Paillier aggregation,
ECDSA-256 signatures, and ECDH key establishment, corresponding to the
classical design family represented by Si et al.~\cite{SI202468}. 

\subsubsection{Topology Emulation}
The four topology classes are emulated using Linux network namespaces and
\texttt{tc netem} for per-link latency injection. The mean link latency
$\overline{\tau}$ is configured as $5\,\mathrm{ms}$ for the intra-data-center
baseline and $50\,\mathrm{ms}$ for the wide-area IoT setting. For topology-scaling experiments, validator counts are selected from
$n_v\in\{4,8,16,32,64\}$; the separate PBFT validator-sensitivity experiment
uses the set reported in Table~\ref{tab:benchmark_config}. Device counts are
selected from $n_d\in\{50,200,500\}$. These experiments operationalize the
network-diameter and connectivity effects analyzed in
Section~\ref{sec:network_resilience}.

The proposed $\Phi$-PHE-BC implementation records an encryption latency of
$7.2\,\mathrm{ms}$ in the comparison summarized in
Table~\ref{tab:enc_latency}. Figure~\ref{fig:crypto_performance_group}(a) visualizes the same
comparison. Figure~\ref{fig:crypto_performance_group}(b) shows the effect of increasing batch
size on per-transaction encryption cost, while
Figure~\ref{fig:crypto_performance_group}(c) illustrates the ciphertext-size expansion of
Paillier relative to ECIES. Figure~\ref{fig:crypto_performance_group}(d) reports threshold
decryption latency as a function of the threshold parameter $t$; in the tested
range, the latency remains below $20\,\mathrm{ms}$.

\begin{table}
\centering
\caption{Encryption latency comparison across representative PHE-based systems (lower is better).}
\label{tab:enc_latency}
\begin{tabular}{lc}
\toprule
\textbf{System} & \textbf{Encryption time (ms)}\\
\midrule
Si et al. & 7.8\\
Aronoff et al. & 15.0\\
Njungle et al. & 8.0\\
$\Phi$-PHE-BC (Proposed) & \textbf{7.2}\\
\bottomrule
\end{tabular}
\end{table}

\begin{figure*}[t]
\centering
\includegraphics[width=0.235\linewidth]{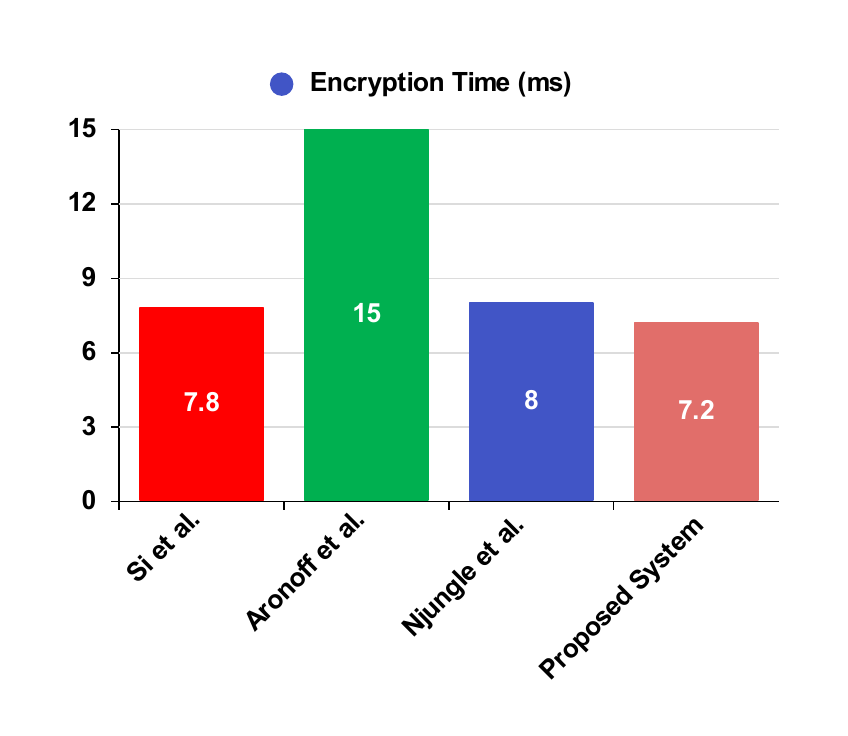}\hfill
\includegraphics[width=0.235\linewidth]{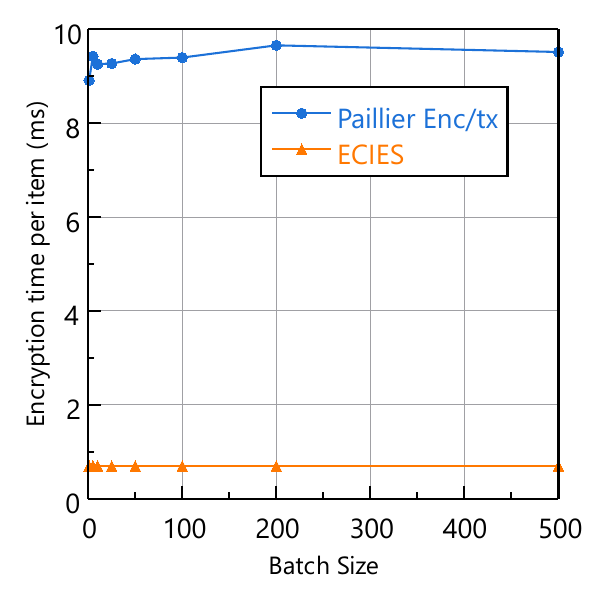}\hfill
\includegraphics[width=0.235\linewidth]{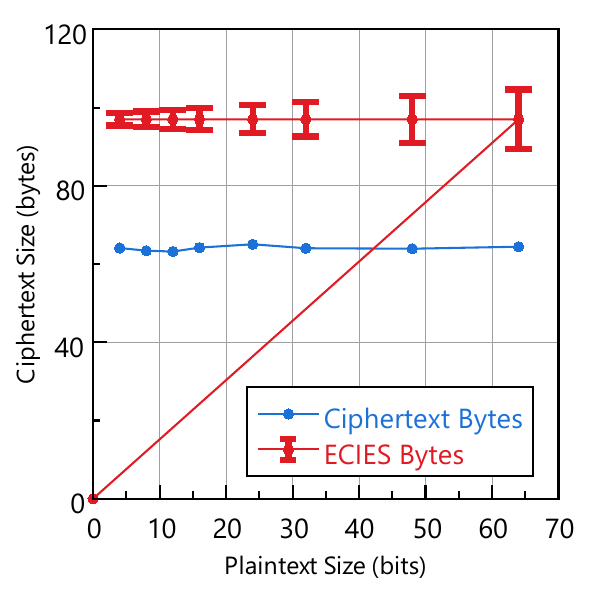}\hfill
\includegraphics[width=0.235\linewidth]{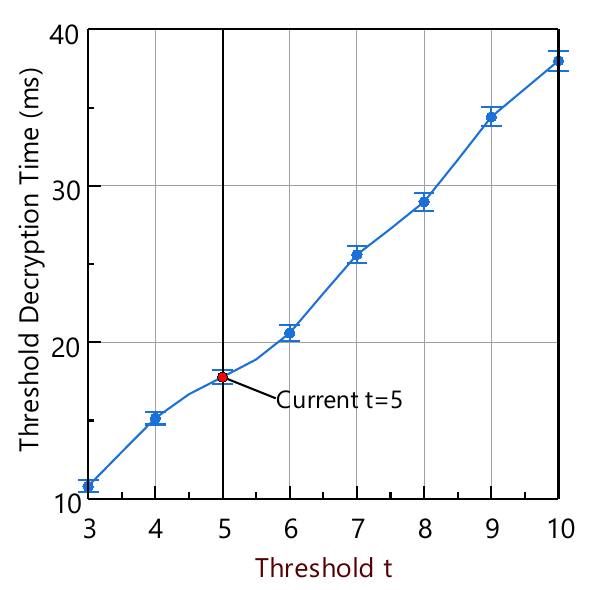}
\caption{Cryptographic performance of the evaluated $\Phi$-PHE-BC implementation: (a) encryption-time comparison across representative PHE-based systems; (b) encryption time per transaction under increasing batch size; (c) ciphertext-size comparison between Paillier and ECIES; and (d) threshold-decryption latency as a function of threshold parameter $t$.}
\label{fig:crypto_performance_group}
\end{figure*}

\subsection{Throughput and Latency}
\label{subsec:bench_tps}

Measured throughput under the intra-data-center latency setting
($\overline{\tau}=5\,\mathrm{ms}$) is consistent with the topology dependence
predicted by Theorem~\ref{thm:topology_throughput}: topologies with smaller
effective validator-network diameter incur a smaller communication component
in Eq.~\eqref{eq:protocol_latency}. Across the validator-count benchmark,
throughput ultimately decreases as the validator population grows because the
adopted BFT voting procedure contains a quadratic all-to-all communication
component, as quantified by Proposition~\ref{prop:communication_cost}.
Table~\ref{tab:throughput_validators} reports the measured PBFT-style
throughput series, and Figure~\ref{fig:consensus_performance_group}(a) shows the
corresponding scaling trend.

\begin{table*}
\centering
\caption{Throughput under increasing validator count (PBFT consensus).}
\label{tab:throughput_validators}
\begin{tabular}{lcccccccc}
\toprule
\textbf{$n_v$} & \textbf{4} & \textbf{5} & \textbf{6} & \textbf{7} & \textbf{8} & \textbf{10} & \textbf{12} & \textbf{16}\\
\midrule
Throughput (TPS) & 327 & 336 & 313 & 297 & 275 & 255 & 204 & 143\\
\bottomrule
\end{tabular}
\end{table*}

\begin{figure*}[t]
\centering
\includegraphics[width=0.48\linewidth]{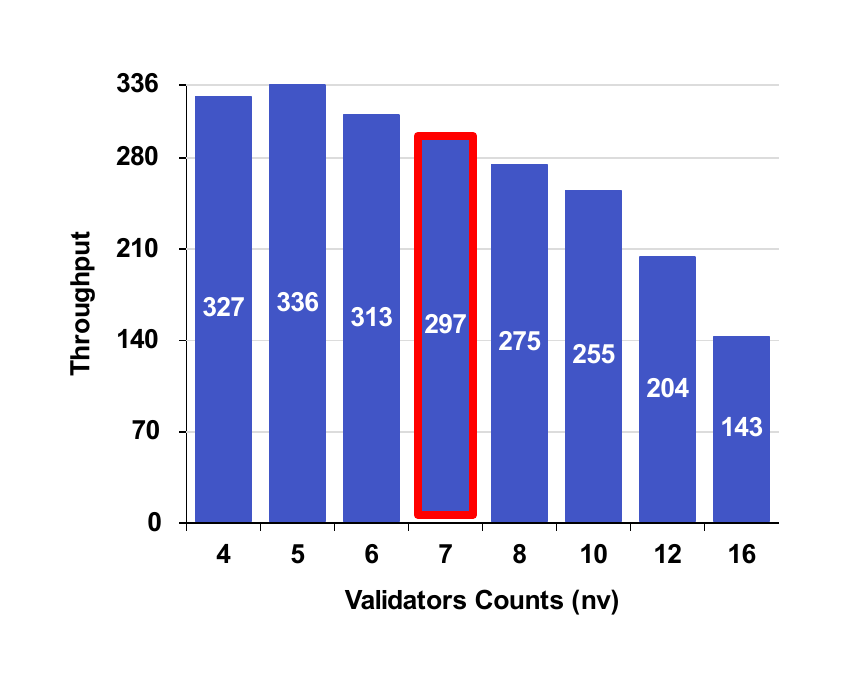}\hfill
\includegraphics[width=0.48\linewidth]{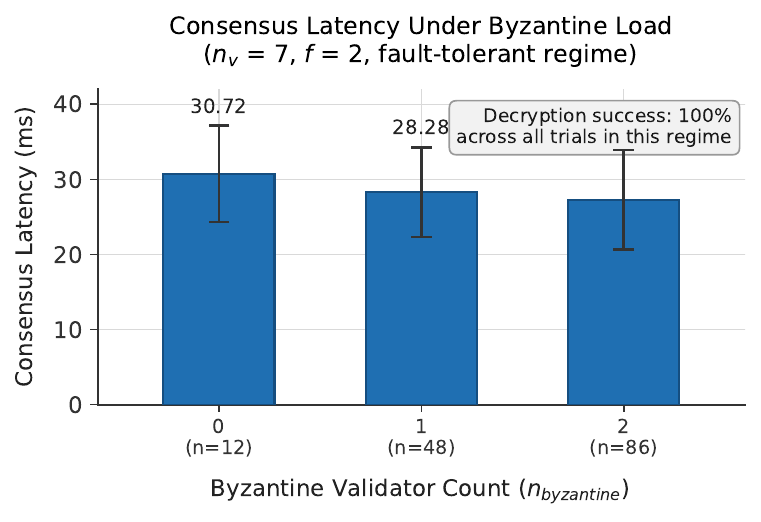}

\caption{Consensus-performance sensitivity: (a) PBFT-style throughput under increasing validator count; and (b) consensus latency under increasing Byzantine-validator count within the configured fault-tolerant regime $n_{\mathrm{byz}}\leq f$.}
\label{fig:consensus_performance_group}
\end{figure*}

For $n_v=7$ and $f=2$, consensus latency under the tested sub-threshold
Byzantine load remained between approximately $27$ and $31\,\mathrm{ms}$ for
$n_{\mathrm{byz}}\in\{0,1,2\}$, while decryption success remained at 100\%
across the recorded trials. These observations are consistent with the
sufficiency direction of Theorem~\ref{thm:topology_liveness} for the tested
fully connected validator overlay. Figure~\ref{fig:consensus_performance_group}(b) reports the
measured Byzantine-load sensitivity. No empirical claim is made for
$n_{\mathrm{byz}}>f$.

Transaction-per-second scaling with network size is shown in
Figure~\ref{fig:scalability_performance_group}(a). Under the benchmark used for the cross-system
comparison, the proposed system records approximately 3120 TPS compared with
approximately 1950 TPS for Si et al., while the corresponding end-to-end
transaction latency decreases from approximately $2307\,\mathrm{ms}$ to
approximately $410\,\mathrm{ms}$. The latency reduction is approximately
82\%. These values are configuration-specific benchmark results and should not
be interpreted as topology-independent deployment guarantees.
Figure~\ref{fig:scalability_performance_group}(b) shows the measured latency trend as the number
of IoT devices increases; Figure~\ref{fig:scalability_performance_group}(c)--(d) provide the cross-system latency and throughput
comparisons.

\begin{figure*}[t]
\centering
\includegraphics[width=0.235\linewidth]{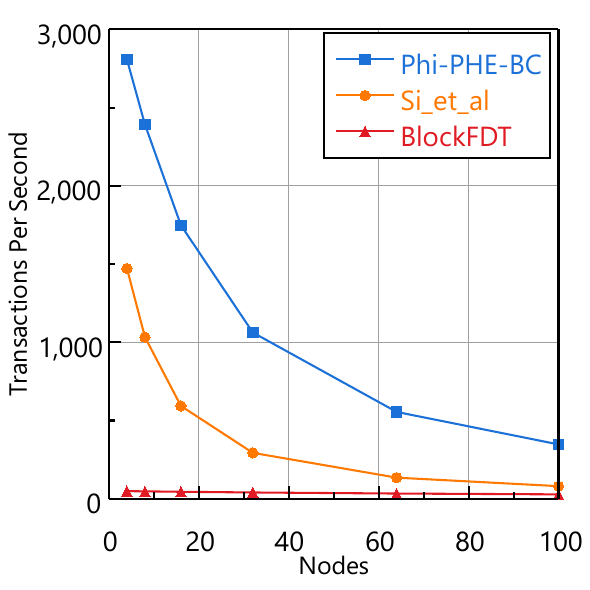}\hfill
\includegraphics[width=0.235\linewidth]{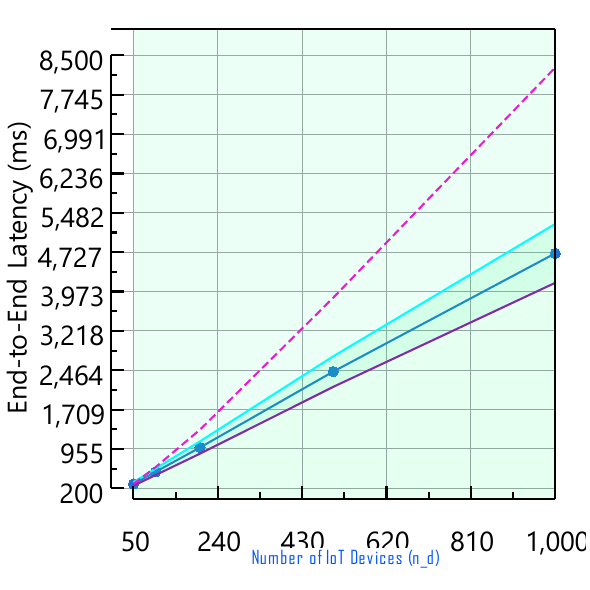}\hfill
\includegraphics[width=0.235\linewidth]{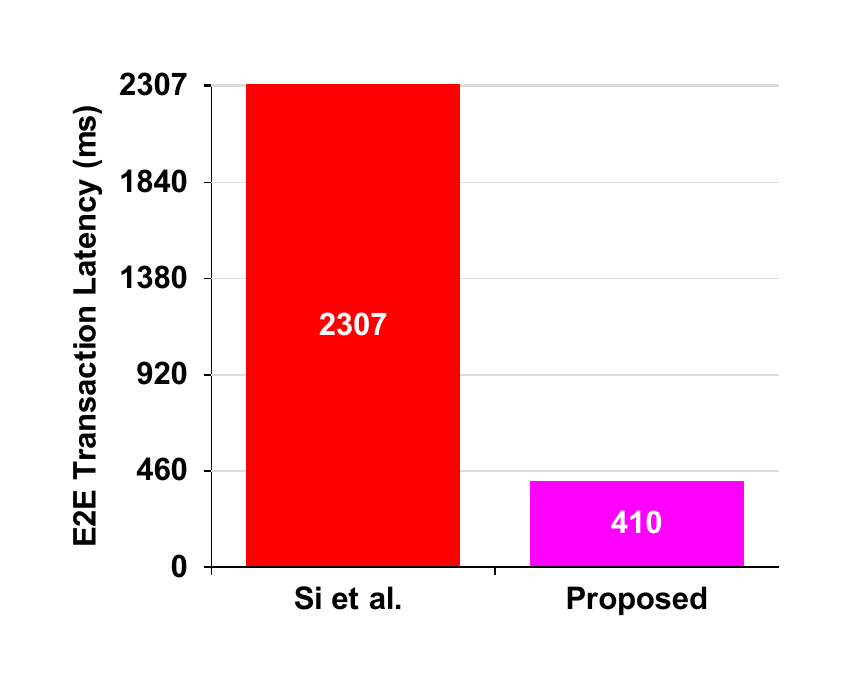}\hfill
\includegraphics[width=0.235\linewidth]{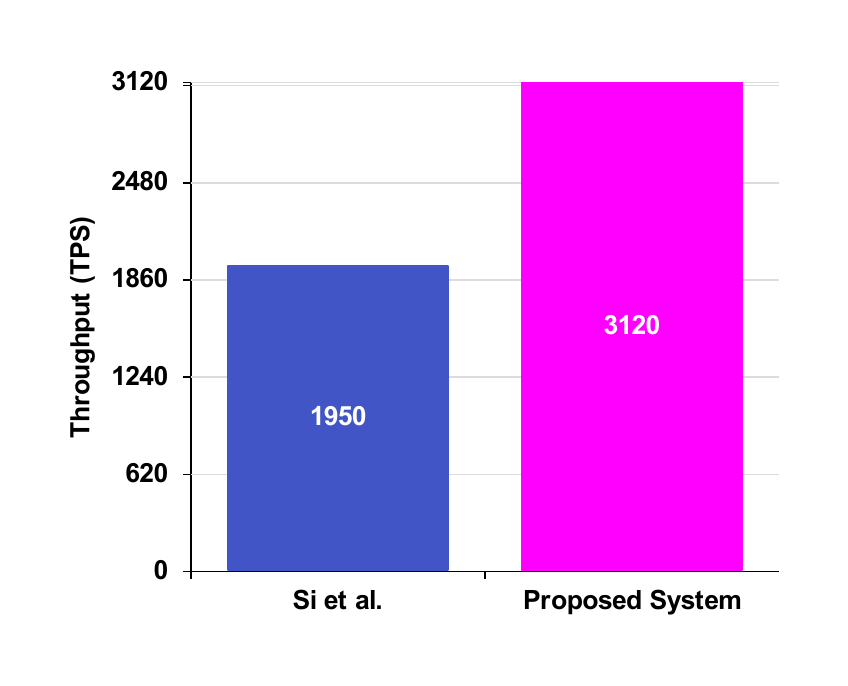}

\caption{Scalability and cross-system performance: (a) transaction-per-second scaling with network-node count; (b) end-to-end latency with increasing IoT-device population; (c) end-to-end transaction-latency comparison against the selected prior system; and (d) throughput comparison under the stated cross-system benchmark configuration.}
\label{fig:scalability_performance_group}
\end{figure*}

\begin{table}
\centering
\caption{End-to-end latency and throughput comparison against the selected prior system.}
\label{tab:e2e_comparison}
\begin{tabular}{lcc}
\toprule
\textbf{Metric} & \textbf{Si et al.} & \textbf{$\Phi$-PHE-BC}\\
\midrule
End-to-end transaction latency (ms) & 2307 & 410\\
Throughput (TPS) & 1950 & 3120\\
Threshold decryption time (ms) & 9.8 & 13.8\\
\bottomrule
\end{tabular}
\end{table}

The threshold-decryption entry in Table~\ref{tab:e2e_comparison} shows that
$\Phi$-PHE-BC incurs a higher threshold-decryption time than the selected Si et
al. comparison value ($13.8\,\mathrm{ms}$ versus $9.8\,\mathrm{ms}$), even
though its reported end-to-end latency and throughput are better under the
cross-system benchmark. This is the expected trade-off to report from the
numerical values provided, rather than claiming a threshold-decryption speedup.
Figure~\ref{fig:baseline_performance_group}(a) visualizes this threshold-decryption
comparison.

Figure~\ref{fig:baseline_performance_group}(b) compares plaintext blockchain, single-key PHE,
ECIES-based encrypted blockchain, and the proposed threshold-PHE blockchain
configuration. Plaintext operation minimizes cryptographic overhead but
provides no confidentiality; a single-key PHE design retains a centralized
key-recovery point; and the proposed threshold design distributes decryption
authority at the cost of additional cryptographic and communication overhead.

\begin{figure*}[t]
\centering
\includegraphics[width=0.40\linewidth]{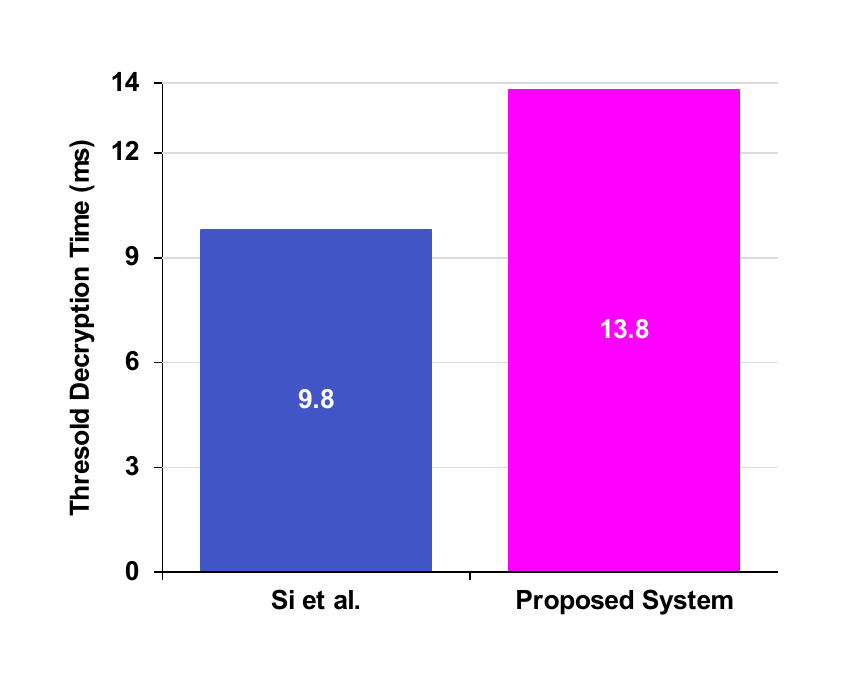}\hfill
\includegraphics[width=0.56\linewidth]{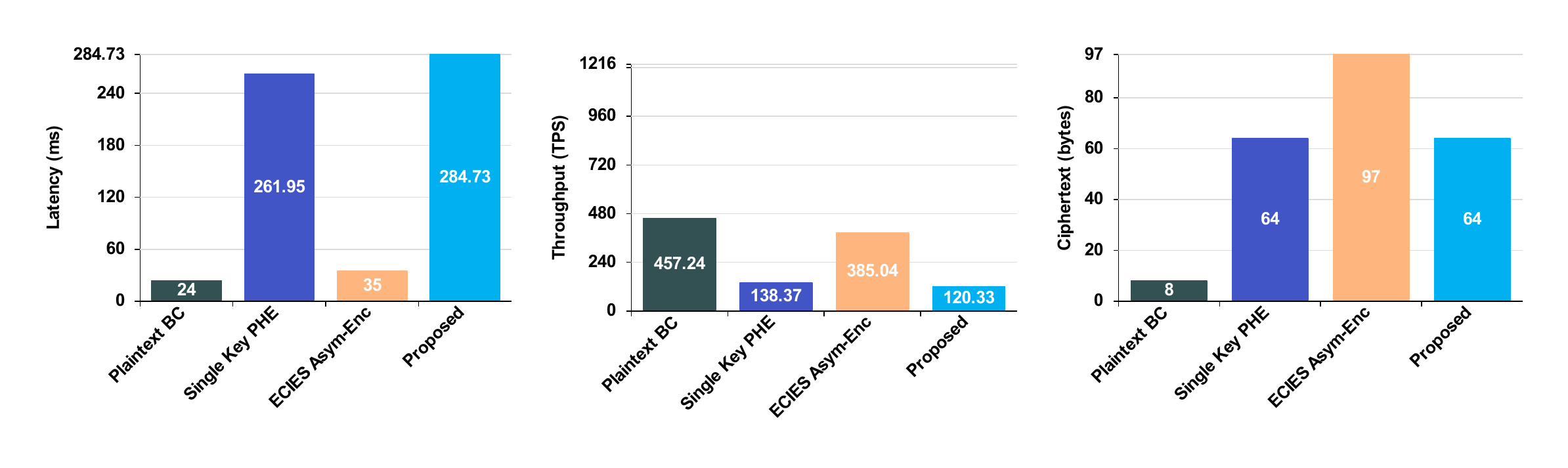}

\caption{Detailed baseline comparisons: (a) threshold-decryption time against the selected prior framework; and (b) latency--throughput--ciphertext trade-off across the evaluated blockchain variants.}
\label{fig:baseline_performance_group}
\end{figure*}

\subsection{Security--Performance Comparison}
\label{subsec:security_performance}

Table~\ref{tab:security_matrix} compares the architectural properties reported
for the selected schemes. The distinguishing feature of $\Phi$-PHE-BC in this
comparison is the joint use of threshold decryption, blockchain integration,
homomorphic aggregation, topology-aware liveness, throughput, and connectivity-based fault-tolerance analysis.

\begin{table*}[t]
\centering
\caption{Security and feature matrix comparison.}
\label{tab:security_matrix}
\renewcommand{\arraystretch}{1.15}
\begin{tabular}{lccccc}
\toprule
\textbf{Scheme} & \textbf{Threshold} & \textbf{Blockchain} & \textbf{Homomorphic} & \textbf{Topology-Aware} & \textbf{IoT}\\
& \textbf{Decryption} & \textbf{Integration} & \textbf{Encryption} & \textbf{Liveness} & \textbf{Support}\\
\midrule
Block-FDT & \xmark & \cmark & \xmark & \xmark & \cmark\\
Si et al. & \cmark & \cmark & \cmark & \xmark & \xmark\\
Ci et al. & \xmark & \xmark & \cmark & \xmark & \xmark\\
Aronoff et al. & \xmark & \cmark & \cmark & \xmark & \xmark\\
Njungle et al. & \xmark & \xmark & \cmark & \xmark & \xmark\\
$\Phi$-PHE-BC (Proposed) & \cmark & \cmark & \cmark & \cmark & \cmark\\
\bottomrule
\end{tabular}
\end{table*}

\subsection{Liveness Verification}
\label{subsec:liveness_verification}

Theorem~\ref{thm:topology_liveness} establishes that liveness holds if and
only if
\begin{equation}
\kappa(G_v)\geq f+1,
\end{equation}
under the authenticated-path and partial-synchrony assumptions stated in
Section~\ref{sec:network_resilience}. The empirical evaluation covers only the
fault-tolerant regime. For $n_v=7$ ($f=2$) and
$n_{\mathrm{byz}}\in\{0,1,2\}$, the fully connected validator overlay
satisfies the required vertex-connectivity condition, and consensus completed
successfully in 100\% of the recorded trials. The measurements summarized in
Figure~\ref{fig:consensus_performance_group}(b) are therefore consistent with the sufficiency
direction of Theorem~\ref{thm:topology_liveness}.

We did not empirically evaluate configurations violating the connectivity
condition, such as a validator overlay with $\kappa(G_v)<f+1$, nor did we run
partition-attack trials for $n_{\mathrm{byz}}>f$. Consequently, the necessity
direction of Theorem~\ref{thm:topology_liveness} is supported in this paper by
the constructive partition argument of Section~\ref{sec:network_resilience},
not by an empirical failure-regime experiment. Directly validating that
failure regime across multiple values of $n_v$ and multiple validator
connectivity patterns is retained as future work.

\subsection{Limitations and Future Work}
\label{subsec:limitations_future}

\subsubsection{Experimental Scope}
The experimental evaluation uses an emulated Hyperledger Fabric environment
with controlled latency settings and modeled topology classes rather than a
large-scale physical IoT deployment. The reported latency and throughput
values should therefore be interpreted as implementation-level evidence under
the tested configurations, not as universal deployment guarantees. The
experiments cover selected validator counts, device populations, and workload
settings; larger networks, heterogeneous edge hardware, intermittent links,
and mobility can alter both cryptographic and consensus costs.

\subsubsection{Trusted-Execution Dependency}
The optional Paillier--BFV bridge of Section~\ref{subsec:bridge} assumes a
trusted execution environment for secure decrypt-and-re-encrypt processing.
The practical security of that optional component therefore, depends on the
TEE trust model and implementation assumptions stated in the framework design.
This dependency is separate from the security of the primary Paillier
aggregation path evaluated in this section.

\subsubsection{Device-Layer Topology Augmentation}
Tree- and star-shaped device deployments do not by themselves establish the
validator-connectivity condition required by
Theorem~\ref{thm:topology_liveness}. The validator overlay must be configured
or augmented independently so that $\kappa(G_v)\geq f+1$, consistent with the
validator/device-topology separation discussed in
Section~\ref{sec:network_resilience}.

\subsubsection{Empirical Liveness-Failure Regime}
As discussed in Section~\ref{subsec:liveness_verification}, the empirical
liveness study covers only $n_{\mathrm{byz}}\leq f$. The failure regime
predicted by the necessity direction of
Theorem~\ref{thm:topology_liveness} remains to be validated experimentally
using controlled vertex-cut and partition attacks.

\subsubsection{Post-Quantum Migration as Future Work}
The present $\Phi$-PHE-BC architecture does \emph{not} claim general
post-quantum security. Theorem~\ref{thm:paillier_ind_cpa} provides classical
IND-CPA confidentiality for Paillier under the DCR assumption, and
Theorem~\ref{thm:transaction_integrity} provides classical transaction
integrity for the deployed ECDSA-256 authentication scheme. In contrast,
Theorem~\ref{thm:noise_flooding} provides a statistical,
information-theoretic protection result for the modeled flooded
partial-decryption shares and therefore does not rely on computational
hardness assumptions. A future quantum-capable adversary could nevertheless
threaten the long-term confidentiality of Paillier ciphertexts and the
unforgeability of ECDSA-authenticated transactions. Closing these gaps is a
future-work objective rather than a property of the current system.

A concrete migration roadmap is as follows:
\begin{itemize}
    \item \textbf{Transport and authentication.}
    Replace classical session-key establishment and ECDSA-256 transaction
    signatures with standardized post-quantum key establishment and digital
    signatures, such as ML-KEM and ML-DSA, respectively, in a future
    implementation~\cite{NIST-FIPS203,NIST-FIPS204}. Their measured effect on
    the full $\Phi$-PHE-BC transaction path must be established experimentally
    rather than assumed. Within the symmetric-cost utility model of
    Theorem~\ref{thm:nash}, any resulting common cost increase is represented
    by $\alpha$ and cancels from the honest-versus-deviate condition.

    \item \textbf{Aggregation layer.}
    Investigate replacing Paillier as the primary aggregation primitive with
    a lattice-based homomorphic scheme, such as BFV or another suitable
    Module-LWE-based construction, so that the aggregation layer itself no
    longer depends on the DCR assumption. The current Paillier-BFV bridge in
    Section~\ref{subsec:bridge} is optional and TEE-dependent; promoting a
    lattice-based primitive to the primary path would therefore require a new
    protocol design and new security analysis rather than a simple relabeling
    of the existing architecture.

    \item \textbf{Threshold decryption over the new ciphertext space.}
    Re-derive Phase~7 for the selected lattice-based ciphertext space and
    determine whether an analog of the current threshold-share protection
    can be constructed. In particular, the noise-flooding analysis of
    Theorem~\ref{thm:noise_flooding} cannot be transferred automatically from
    the present Paillier-based exponent/share model to a ring-based scheme.

    \item \textbf{End-to-end security against a quantum-capable adversary.}
    Revisit Theorem~\ref{thm:composable_ind_cpa} after the transport,
    authentication, and aggregation primitives have actually been replaced by
    post-quantum candidates. The present theorem is explicitly a classical
    end-to-end confidentiality result and should not be interpreted as a
    quantum-security proof.

    \item \textbf{Empirical post-quantum overhead.}
    Implement the migrated stack and repeat the benchmark methodology of
    Sections~\ref{subsec:impl} and~\ref{subsec:bench_tps} to measure end-to-end
    latency, throughput, ciphertext expansion, signature size, key-establishment
    cost, and validator CPU/memory overhead under realistic IoT workloads.
\end{itemize}

The roadmap above is intentionally separated from the evaluated results. No
ML-KEM, ML-DSA, BFV-primary aggregation, or other lattice-based post-quantum
configuration is treated as implemented or proven in the present paper.

\subsubsection{Variation Across Reported Throughput Results}
Throughput values in this section correspond to different evaluation
objectives. Table~\ref{tab:throughput_validators} and
Figure~\ref{fig:consensus_performance_group}(a) report validator-count sensitivity at a
fixed moderate block configuration, whereas Table~\ref{tab:e2e_comparison}
and Figure~\ref{fig:scalability_performance_group} report the cross-system
benchmark configuration. Figure~\ref{fig:baseline_performance_group}(b) reports a
fixed cross-variant trade-off configuration. These values should therefore be
compared only within their stated benchmark settings rather than treated as
interchangeable estimates of a single deployment-wide throughput constant.

\section{Conclusion}

This paper presented $\Phi$-PHE-BC, a topology-aware homomorphic blockchain
architecture for privacy-preserving IoT data aggregation. The framework combines
on-chain Paillier-based aggregation, threshold decryption with
information-theoretic protection of partial-decryption shares, standard
signature-based transaction integrity, and topology-parameterized analysis of liveness and throughput, together with an
explicit per-block communication-cost model.

The central contribution of this work is the proof that validator-graph
connectivity is a necessary and sufficient determinant of protocol liveness,
namely,
\begin{equation}
    \kappa(G_v) \geq f + 1,
\end{equation}
together with topology-parameterized throughput bounds for tree, star, mesh,
and scale-free deployment models and an explicit per-block communication-cost
expression---a treatment absent from the selected PHE-blockchain literature. We further showed that validator incentive stability,
formalized through a Nash-equilibrium analysis of honest-versus-deviate
strategies, is robust to the magnitude of the deployed cryptographic cost. This
property is directly relevant to future increases in cryptographic overhead.

The implementation results on Hyperledger Fabric suggest that the proposed
design is practical for the tested configurations and can achieve lower latency
than the selected classical baseline while preserving threshold-based distributed
trust. At the same time, these findings should be interpreted in light of the
study's assumptions, including emulated network conditions, selected topology
classes and validator sizes, and reliance on a trusted execution environment for
the optional Paillier--BFV bridge.

We deliberately scope the security claims of this paper to the classical and
information-theoretic guarantees established in Section~5. Section~\ref{subsec:limitations_future} instead outlines a concrete migration path
toward a fully post-quantum-confidential instantiation of the framework, which
constitutes the primary direction for future work.

Future research will also extend the evaluation to real-world IoT deployments,
broader workload and mobility settings, and stronger verification of bridge and
computation correctness under less trusted execution assumptions.\\






\bibliographystyle{unsrt}
\bibliography{sample}

\end{document}